\title[]{Approximation of Semiclassical Expectation Values\\by Symplectic Gaussian Wave Packet Dynamics}
\author{Tomoki Ohsawa}
\address{Department of Mathematical Sciences, The University of Texas at Dallas, 800 W Campbell Rd, Richardson, TX 75080-3021}
\email{tomoki@utdallas.edu}
\date{\today}
\keywords{Semiclassical Schr\"odinger equation, expectation values, Hamiltonian dynamics, Gaussian wave packet}
\subjclass[2010]{35Q41, 81Q05, 81Q20, 81Q70}
\theoremstyle{plain}
\newtheorem{theorem}{Theorem}[section]
\newtheorem{corollary}[theorem]{Corollary}
\newtheorem{lemma}[theorem]{Lemma}
\newtheorem{proposition}[theorem]{Proposition}
\theoremstyle{definition}
\theoremstyle{remark}
\newtheorem{remark}[theorem]{Remark}
\def\od#1#2{\dfrac{d#1}{d#2}}
\def\pd#1#2{\dfrac{\partial #1}{\partial #2}}
\def\tpd#1#2{\partial #1/\partial #2}
\def\parentheses#1{{\left(#1\right)}}
\def\brackets#1{{\left[#1\right]}}
\def\braces#1{{\left\{#1\right\}}}
\def\tr{\mathop{\mathrm{tr}}\nolimits}
\def\norm#1{{\left\|#1\right\|}}
\def\abs#1{{\left|#1\right|}}
\def\DS{\displaystyle}
\def\R{\mathbb{R}}
\def\C{\mathbb{C}}
\def\N{\mathbb{N}}
\def\defeq{\mathrel{\mathop:}=}
\def\setdef#1#2{{\left\{ #1 \ |\ #2 \right\}}}
\def\ip#1#2{{\left\langle#1,#2\right\rangle}}
\def\bigip#1#2{{\bigl\langle#1,#2\big\rangle}}
\def\exval#1{{\left\langle#1\right\rangle}}
\def\bigexval#1{{\bigl\langle#1\bigr\rangle}}
\renewcommand{\Re}{\operatorname{Re}}
\def\eps{\varepsilon}
\def\Mat{\mathsf{M}}
\def\SO{\mathsf{SO}}
\def\rmi{{\rm i}}
\begin{document}

\footskip=.6in

\begin{abstract}
  This paper concerns an approximation of the expectation values of the position and momentum of the solution to the semiclassical Schr\"odinger equation with a Gaussian as the initial condition.
  Of particular interest is the approximation obtained by our symplectic/Hamiltonian formulation of the Gaussian wave packet dynamics that introduces a correction term to the conventional formulation using the classical Hamiltonian system by Hagedorn and others.
  The main result is a proof that our formulation gives a higher-order approximation than the classical formulation does to the expectation value dynamics under certain conditions on the potential function.
  Specifically, as the semiclassical parameter $\varepsilon$ approaches $0$, our dynamics gives an $O(\varepsilon^{3/2})$ approximation of the expectation value dynamics whereas the classical one gives an $O(\varepsilon)$ approximation.
\end{abstract}

\maketitle

\section{Introduction}
\subsection{Semiclassical Schr\"odinger Equation and Gaussian Wave Packet}
Consider the following initial value problem of the semiclassical Schr\"odinger equation on $\R^{d}$:
\begin{subequations}
  \label{ivp:Schroedinger}
  \begin{gather}
    \label{eq:Schroedinger}
    \rmi\,\eps \pd{}{t}\psi(t,x) = \hat{H} \psi(t,x),
    \qquad
    \hat{H} = \frac{\hat{p}^{2}}{2} + V(x), \\
    \label{eq:Schroedinger-IC}
    \psi(0,x) = \phi_{0}(q(0), p(0), Q(0), P(0), S(0); x),
    \end{gather}
\end{subequations}
where $\eps > 0$ is the semiclassical parameter, $\hat{p} \defeq -\rmi\eps\tpd{}{x}$ is the momentum operator, and $\phi_{0}$ is the Gaussian wave function
\begin{equation}
  \label{eq:phi_0}
  \phi_{0}(q, p, Q, P, S; x)
  \defeq \frac{ (\det Q)^{-1/2} }{ (\pi\eps)^{d/4} } \exp\braces{ \frac{\rmi}{\eps}\parentheses{ \frac{1}{2}(x - q)^{T}P Q^{-1}(x - q) + p \cdot (x - q) + S } }.
\end{equation}
The parameters $(q,p)$ live in the cotangent bundle $T^{*}\R^{d} \cong \R^{d} \times \R^{d}$, whereas $Q, P \in \Mat_{d}(\mathbb{C})$ (the set of $d \times d$ complex matrices) satisfy
\begin{equation*}
  Q^{T}P - P^{T}Q = 0
  \quad\text{and}\quad
  Q^{*}P - P^{*}Q = 2\rmi I,
\end{equation*}
and $S \in \R$ is a phase factor.
It is worth noting that the imaginary part of $PQ^{-1}$ is given by $(Q Q^{*})^{-1}$; see, e.g., \citet[Lemma~V.1.1]{Lu2008}.

The seminal works by \citet{Ha1980,Ha1998} (see also \citet{He1975a,He1976b,Heller-LesHouches}, \citet{Ro2007}, \citet{CoRo2012}) showed that one may approximate the solution $\psi(t,x)$ of the above initial value problem~\eqref{ivp:Schroedinger} in the semiclassical limit $\eps \to 0$ by the time-dependent Gaussian wave packet
\begin{equation}
  \label{eq:phi_0-t}
  \phi_{0}(t,x) \defeq \phi_{0}(q(t), p(t), Q(t), P(t), S(t); x)
\end{equation}
whose parameters evolve in time according to the ordinary differential equations
\begin{equation}
  \label{eq:Hagedorn}
  \begin{array}{c}
    \DS
    \dot{q} = p,
    \qquad
    \dot{p} = -DV(q),
    \qquad
    \dot{Q} = P,
    \qquad
    \dot{P} =  -D^{2}V(q)\,Q,
    \medskip\\
    \DS
    \dot{S} = \frac{p^{2}}{2} - V(q),
  \end{array}
\end{equation}
where $DV$ and $D^{2}V$ stand for the gradient and Hessian of the potential $V$.
Note that the equations for $(q,p)$ are the \textit{classical} Hamiltonian system.
Specifically, \citet{Ha1980,Ha1998} proved, under certain conditions on the potential $V$, that the error in terms of the $L^{2}$-norm $\norm{\,\cdot\,}$ in $L^{2}(\R^{d})$ is $O(\eps^{1/2})$ in the sense that there exists a function $\mathscr{C}(t)$ such that
\begin{equation}
  \label{eq:estimate-Hagedorn}
  \norm{ \psi(t,x) - \phi_{0}(t,x) } \le \mathscr{C}(t)\,\eps^{1/2}.
\end{equation}

\subsection{Symplectic Gaussian Wave Packet Dynamics}
In a series of works \cite{OhLe2013,Oh2015c,OhTr2017}, we proposed the following symplectic/Hamiltonian alternative to the evolution equation~\eqref{eq:Hagedorn}:
\begin{subequations}
  \label{eq:Symp_Hagedorn}
  \begin{gather}
    \label{eq:Symp_Hagedorn-qpQP}
    \DS
    \dot{q} = p,
    \qquad
    \dot{p} = -\partial_{q} \parentheses{ V(q) + \eps\,V^{(1)}(q,Q) },
    \qquad
    \dot{Q} = P,
    \qquad
    \dot{P} = -D^{2}V(q)\,Q,
    \medskip\\
    \label{eq:Symp_Hagedorn-S}
    \DS \dot{S} = \frac{p^{2}}{2} - V(q),
  \end{gather}
\end{subequations}
where $\partial_{q}$ is a shorthand for $\tpd{}{q}$ and
\begin{equation}
  \label{eq:V^1}
  V^{(1)}(q,Q) \defeq \frac{1}{4}\tr\parentheses{ Q Q^{*} D^{2}V(q) }.
\end{equation}
The only difference from \eqref{eq:Hagedorn} of Hagedorn is the $O(\eps)$ correction term in the potential.
The correction term renders the coupled system~\eqref{eq:Symp_Hagedorn-qpQP} for $(q, p, Q, P)$ a Hamiltonian system on $T^{*}\R^{d} \times \Mat_{d}(\C) \times \Mat_{d}(\C)$ with a natural symplectic structure and the following Hamiltonian~\cite{Oh2015c}:
\begin{equation}
  \label{eq:H}
  H^{\eps}(q,p,Q,P) \defeq \frac{p^{2}}{2} + V(q)
  + \frac{\eps}{4} \parentheses{ \tr(P^{*}P) + \tr\parentheses{ Q Q^{*}\, D^{2}V(q) } }.
\end{equation}
Note also that \eqref{eq:Symp_Hagedorn-S} is decoupled and $t \mapsto S(t)$ is obtained by a quadrature using the solution to \eqref{eq:Symp_Hagedorn-qpQP}.
In what follows, the time dependent functions $t \mapsto (q(t),p(t),Q(t),P(t),S(t))$ refer to the solution to \eqref{eq:Symp_Hagedorn} with the initial condition $(q(0),p(0),Q(0),P(0),S(0))$ at $t = 0$, unless otherwise stated.

This is in contrast to \eqref{eq:Hagedorn}, which is Hamiltonian in the decoupled classical dynamics of $(q,p)$ in the classical phase space $T^{*}\R^{d}$ but not as a system for $(q, p, Q, P)$.
We also note in passing that \citet{WaLuWe2017} obtained \eqref{eq:Symp_Hagedorn} from a different perspective, and also that a correction term of the above form was proposed earlier by \citet{PaSc1994} for the one-dimensional case, and also by \citet{PrPe2000,PrPe2002} and \citet{Pr2006} in a different manner.
Our formulation gave a symplectic-geometric account of the variational formulation of \citet{FaLu2006} (see also \citet[Section~II.4]{Lu2008}), and yielded the correction term as a result of an asymptotic expansion of the resulting potential term~\cite{OhLe2013,Oh2015c}.

\subsection{Main Result}
The main question we would like to address is how the $O(\eps)$ correction term in \eqref{eq:Symp_Hagedorn} contributes to the accuracy of the approximation by the Gaussian wave packet dynamics.
We are particularly interested in approximating the dynamics of the expectation values of the position and momentum operators
\begin{equation*}
  \hat{z} \defeq (\hat{x}, \hat{p}) = \parentheses{ \hat{x},-\rmi\eps\pd{}{x} },
\end{equation*}
that is, the dynamics defined as
\begin{equation}
  \label{eq:exval_dynamics}
  t \mapsto \exval{\hat{z}}(t) = \ip{ \psi(t,\,\cdot\,) }{ \hat{z}\,\psi(t,\,\cdot\,) },
\end{equation}
where $\ip{\,\cdot\,}{\,\cdot\,}$ is the standard (right-linear) inner product on $L^{2}(\R^{d})$, and $t \mapsto \psi(t,\,\cdot\,)$ is the solution of the initial value problem~\eqref{ivp:Schroedinger}.

Numerical experiments~\cite{OhTr2017} suggest that our dynamics~\eqref{eq:Symp_Hagedorn} gives a better approximation than the classical dynamics~\eqref{eq:Hagedorn} does to the exact expectation value dynamics.
We note that the comparisons were made with respect to the expectation value dynamics obtained by Egorov's method~\cite{CoRo2012,LaRo2010,LaLu2020} or the Initial Value Representation (IVR) method~\cite{Mi1970,Mi1974b,WaSuMi1998,Mi2001}, which is known to give an $O(\eps^{2})$ approximation to the exact dynamics~\eqref{eq:exval_dynamics}; see, e.g., \citet{Eg1969}, \citet{BoRo2002}, and \citet[Chapter~11]{Zw2012}.

Our main result gives a rigorous account of this observation:
\begin{theorem}
  \label{thm:main}
  Suppose that $V \in C^{4}(\R^{d})$ is bounded from below, i.e., $C_{1} \le V(x)$ for some $C_{1} \in \R$ for any $x \in \R^{d}$, and also that $x \mapsto D^{2}_{ij}V(x)$, i.e., the $(i,j)$-component of the Hessian $D^{2}V(x)$, is bounded for any $i,j \in \{1, \dots, d\}$.
  Let $t \mapsto (z(t), Q(t), P(t))$ with $z(t) \defeq (q(t), p(t))$ be the solution to \eqref{eq:Symp_Hagedorn-qpQP}, $t \mapsto z^{0}(t) = (q^{0}(t), p^{0}(t))$ be that to the classical Hamiltonian system in \eqref{eq:Hagedorn} with the initial condition $z^{0}(0) = z(0)$, and let $t \mapsto \exval{\hat{z}}(t)$ be the exact expectation value dynamics~\eqref{eq:exval_dynamics}.
  Then, for any $i \in \{1, \dots, 2d\}$, $z_{i}(t) - \exval{\hat{z}_{i}}(t) = O(\eps^{3/2})$ in the sense that there exists a function $\mathscr{C}_{i}(t)$ such that
  \begin{equation*}
    \abs{ z_{i}(t) - \exval{\hat{z}_{i}}(t) } \le \mathscr{C}_{i}(t)\,\eps^{3/2},
  \end{equation*}
  whereas $z^{0}_{i}(t) - \exval{\hat{z}_{i}}(t) = O(\eps)$ in the same sense.
\end{theorem}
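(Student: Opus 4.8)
The plan is to compare the exact expectation-value dynamics with the wave packet center through the observation that a Gaussian carries its own center as the expectation of $\hat{z}$. Since $\abs{\phi_{0}}^{2}$ is a Gaussian centered at $q$ with covariance $\frac{\eps}{2}QQ^{*}$ (using $\Im(PQ^{-1}) = (QQ^{*})^{-1}$ quoted above), one verifies directly that $\ip{\phi_{0}}{\hat{x}\,\phi_{0}} = q$ and $\ip{\phi_{0}}{\hat{p}\,\phi_{0}} = p$, i.e.\ $\exval{\hat{z}}_{\phi_{0}(t)} = z(t)$ at every time, for both the symplectic and the classical parameter dynamics (which share $\dot{Q} = P$, $\dot{P} = -D^{2}V(q)\,Q$). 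Hence $z_{i}(t) - \exval{\hat{z}_{i}}(t) = \exval{\hat{z}_{i}}_{\phi_{0}(t)} - \exval{\hat{z}_{i}}_{\psi(t)}$, and because $\psi(0) = \phi_{0}(0)$ the two expectation-value evolutions agree at $t = 0$; it remains to control how fast they separate. For $\psi$ I would invoke Ehrenfest's theorem (legitimate since the bounded Hessian makes $\hat{H}$ essentially self-adjoint and keeps the relevant moments finite and differentiable), giving $\tod{}{t}\exval{\hat{x}}_{\psi} = \exval{\hat{p}}_{\psi}$ and $\tod{}{t}\exval{\hat{p}}_{\psi} = -\exval{DV(\hat{x})}_{\psi}$, while for $\phi_{0}$ the parameter ODEs give $\dot{z}$ directly.

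The force is where the correction term enters. Taylor expanding $DV$ about the Gaussian's center $q$ and using that the odd central moments of $\abs{\phi_{0}}^{2}$ vanish while the second central moment is $\frac{\eps}{2}QQ^{*}$, I obtain
\[
  \ip{\phi_{0}}{DV(\hat{x})\,\phi_{0}} = DV(q) + \frac{\eps}{4}\tr\parentheses{QQ^{*}D^{3}V(q)} + O(\eps^{3/2}),
\]
where the remainder is third order in $\hat{x}-q$ because $V \in C^{4}$ only permits the expansion through the quadratic term. The displayed $O(\eps)$ term is exactly $\partial_{q}\parentheses{\eps V^{(1)}(q,Q)}$ from \eqref{eq:V^1}, so the symplectic law \eqref{eq:Symp_Hagedorn-qpQP} reads $\dot{p} = -\ip{\phi_{0}}{DV(\hat{x})\,\phi_{0}} + O(\eps^{3/2})$, whereas the classical law \eqref{eq:Hagedorn} reads $\dot{p} = -\ip{\phi_{0}}{DV(\hat{x})\,\phi_{0}} + \frac{\eps}{4}\tr(QQ^{*}D^{3}V(q)) + O(\eps^{3/2})$, i.e.\ it leaves the $O(\eps)$ variance term uncancelled. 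This single discrepancy is the mechanism behind the $O(\eps^{3/2})$-versus-$O(\eps)$ dichotomy.

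Next I set up the error ODE. Writing $e = z - \exval{\hat{z}}_{\psi}$ with components $e_{x}, e_{p}$ and $\bar{x} = \exval{\hat{x}}_{\psi}$, the first component gives $\dot{e}_{x} = e_{p}$, and the second gives $\dot{e}_{p} = \exval{DV(\hat{x})}_{\psi} - \ip{\phi_{0}}{DV(\hat{x})\,\phi_{0}} + O(\eps^{3/2})$. Expanding $\exval{DV(\hat{x})}_{\psi}$ about its own mean $\bar{x}$ kills the first-moment term and leaves $DV(\bar{x}) + \tfrac{1}{2}D^{3}V(\bar{x}):\Sigma_{\psi} + O(\eps^{3/2})$, where $\Sigma_{\psi}$ is the position covariance of $\psi$. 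Subtracting, and using $q = \bar{x} + e_{x}$, the boundedness of $D^{2}V$, and the covariance matching $\Sigma_{\psi} = \tfrac{\eps}{2}QQ^{*} + O(\eps^{3/2})$, I get
\[
  \dot{e}_{p} = -D^{2}V(\bar{x})\,e_{x} + O(\abs{e_{x}}^{2}) + O(\eps\abs{e_{x}}) + O(\eps^{3/2}).
\]
The linear part has a matrix that is bounded in time (because $D^{2}V$ is bounded), so Grönwall's inequality applied to $(e_{x}, e_{p})$, together with a bootstrap that absorbs the superlinear terms once $\abs{e}$ is known to be small, yields $\abs{e(t)} \le \mathscr{C}(t)\,\eps^{3/2}$. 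For the classical dynamics the identical computation carries the extra uncancelled forcing $\frac{\eps}{4}\tr(QQ^{*}D^{3}V(q^{0})) = O(\eps)$, so the same Grönwall argument yields only $\abs{z^{0}(t) - \exval{\hat{z}}(t)} \le \mathscr{C}(t)\,\eps$.

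The hard part will be the moment estimates for the \emph{exact} solution, namely $\Sigma_{\psi}(t) = \tfrac{\eps}{2}Q(t)Q(t)^{*} + O(\eps^{3/2})$ and the bound $O(\eps^{3/2})$ on the third central moment of $\psi$, neither of which follows from the plain $L^{2}$ estimate \eqref{eq:estimate-Hagedorn} since they involve expectations of unbounded operators. I would obtain them by upgrading Hagedorn's approximation to a weighted statement: writing $\psi = \phi_{0} + \eps^{1/2}r$ with $r$ itself localized on the $O(\eps^{1/2})$ scale, the cross terms $\eps^{1/2}\ip{r}{(\hat{x}-\bar{x})^{\otimes k}\phi_{0}}$ are controlled by the explicit Gaussian moments of $\phi_{0}$, giving the desired orders. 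Because only $D^{2}V$ is assumed bounded, the remainders carrying $D^{3}V$ and $D^{4}V$ must then be dominated using the Gaussian-type decay of $\psi$ over a bounded region: for small $\eps$ the trajectories $\bar{x}(t), q(t)$ stay within $O(\eps^{1/2})$ of the classical trajectory, which the linear growth of $DV$ keeps in a compact set on $[0,T]$, so $D^{3}V, D^{4}V$ are bounded there uniformly in $\eps$ and absorbed into $\mathscr{C}_{i}(t)$. Establishing these weighted bounds and closing the self-consistent Grönwall estimate is the technical crux; the algebraic cancellation in the second paragraph, by contrast, is the conceptual heart.
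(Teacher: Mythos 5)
Your mechanism is the right one: the identity $\tfrac{1}{2}D^{3}_{ijk}V(q)\,\tfrac{\eps}{2}(QQ^{*})_{jk} = \eps\,\partial_{q_{i}}V^{(1)}(q,Q)$ shows that the corrected force law reproduces the Gaussian-averaged force $\ip{\phi_{0}}{DV(\hat{x})\,\phi_{0}}$ to higher order than $-DV(q)$ does, and this is the same cancellation that the paper encodes in Lemma~\ref{lem:orthogonality} (the correction term kills exactly the first-excited-state component of the residual). Your Ehrenfest-plus-Gronwall architecture is also genuinely different from the paper's, which instead writes $\exval{\hat{z}}(t)-z(t) = 2\Re\ip{\mathcal{Z}_{0}(t)}{(\hat{z}-z(t))\phi_{0}(t)} + \ip{\mathcal{Z}_{0}(t)}{(\hat{z}-z(t))\mathcal{Z}_{0}(t)}$ with $\mathcal{Z}_{0}=\psi-\phi_{0}$, handles the cross term by a Duhamel representation of $\mathcal{Z}_{0}$ combined with the orthogonality of $\alpha^{(0)}\phi_{0}$ to the low excited states, and handles the quadratic term by a Gronwall inequality for the weighted norms $\norm{(\hat{x}-q(t))\mathcal{Z}_{0}(t)}$ and $\norm{(\hat{p}-p(t))\mathcal{Z}_{0}(t)}$.

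The genuine gap is in what you yourself flag as the crux: the moment estimates for the \emph{exact} solution $\psi$. Your error ODE needs (a) $\Sigma_{\psi}(t)=\tfrac{\eps}{2}Q(t)Q(t)^{*}+O(\eps^{3/2})$, (b) an $O(\eps^{3/2})$ bound on the Taylor remainder $\exval{R(\hat{x})}_{\psi}$ whose integrand involves $D^{3}V$ and $D^{4}V$, neither of which is assumed bounded. For (a), the cross term $2\Re\ip{\mathcal{Z}_{0}}{(\hat{x}-q)^{\otimes 2}\phi_{0}}$ is indeed $O(\eps^{3/2})$ by Cauchy--Schwarz, but the diagonal term requires $\norm{(\hat{x}-q)\mathcal{Z}_{0}}=O(\eps^{3/4})$ at least; this is precisely the weighted estimate that does \emph{not} follow from $\norm{\mathcal{Z}_{0}}=O(\eps^{1/2})$, since $\hat{x}-q$ is unbounded, and is the content of the paper's coupled Gronwall argument in Section~\ref{sec:proof}. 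Your justification---``$r$ itself localized on the $O(\eps^{1/2})$ scale'' and ``the Gaussian-type decay of $\psi$ over a bounded region''---assumes properties of $\psi$ that are exactly what must be proved: only $\phi_{0}$ has explicit Gaussian decay (which is what the paper's Lemma~\ref{lem:zeta_0-estimate} exploits in its ball/complement splitting), while $\mathcal{Z}_{0}$ has no a priori pointwise or weighted localization. For (b) the same issue recurs, compounded by the need for third-moment control of $\psi$, which would force you into a hierarchy of weighted bounds $\norm{(\hat{x}-q)^{k}\mathcal{Z}_{0}}$ beyond $k=1$. So the proposal is a plausible alternative skeleton whose conceptual heart is correct, but it cannot close without importing (and in places extending) the paper's weighted-norm Gronwall machinery; as written, the step asserting localization of the remainder $r$ would fail.
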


Several remarks are in order.
This paper is \textit{not} about improving the accuracy of an ansatz for the wave function itself.
In fact, replacing \eqref{eq:Hagedorn} by \eqref{eq:Symp_Hagedorn} does \textit{not} improve the estimate \eqref{eq:estimate-Hagedorn} in terms of $\eps$, as we shall show in Corollary~\ref{cor:magic_formula} (with $\mathbf{n} = 0$).
The focus of the paper is rather on improving the approximation of the expectation value dynamics \textit{without having any additional time evolution equations other than \eqref{eq:Hagedorn} nor assuming any ansatz other than the Gaussian~\eqref{eq:phi_0-t}}: We achieve it by simply introducing a correction term to \eqref{eq:Hagedorn}.

We also would like to stress that, as shown in \cite{OhLe2013,Oh2015c}, the derivation of \eqref{eq:Symp_Hagedorn} also involves only the Gaussian \eqref{eq:phi_0}.
In other words, the $O(\eps)$ correction term does \textit{not} come from any higher-order ansatz as one might expect.
This is in contrast to assuming a higher-order ansatz than just the Gaussian~\eqref{eq:phi_0-t} as is done in \cite{Ha1981,Hagedorn1985,Ha1998,HaJo1999,HaJo2000} and \cite[Theorem~24 on p.~109]{CoRo2012}.
One can certainly improve the accuracy of the ansatz that way, but needs additional evolution equations in addition to those for $(q,p,Q,P,S)$.

Note also that our approximation involves only a single initial value problem of \eqref{eq:Symp_Hagedorn} as opposed to averaging solutions over numerous initial conditions like the Egorov/IVR method mentioned above.

Throughout the paper, we will carry out asymptotic analysis as $\eps \to 0$ of time-dependent functions, and will employ the same notation used in the statement of the above theorem for brevity.
Specifically, when we write $f(t,\eps) = O(\eps^{r})$ for some time-dependent function $f$ with some $r \in \R$, it means that there exists a function $\mathscr{C}(t)$ such that $\abs{ f(t,\eps) } \le \mathscr{C}(t)\,\eps^{r}$ as $\eps \to 0$.

\subsection{Approximation of Other Observables}
One also naturally wonders whether Theorem~\ref{thm:main} extends to the expectation values of general observables as well.
We defer this question to future work.
However, it is easy to see that the result holds for the Hamiltonian with an even better approximation:
As we mentioned above, our dynamics \eqref{eq:Symp_Hagedorn} is a Hamiltonian system with the Hamiltonian $H^{\eps}$ given in \eqref{eq:H}.
But then this Hamiltonian is an $O(\eps^{2})$ approximation to the expectation value of the Hamiltonian operator $\hat{H}$ from \eqref{eq:Schroedinger} with respect to the Gaussian:
\begin{equation*}
  \ip{ \phi_{0}(q,p,Q,P,S) }{ \hat{H} \phi_{0}(q,p,Q,P,S) } = H^{\eps}(q,p,Q,P) + O(\eps^{2}).
\end{equation*}
This follows from Laplace's method (see, e.g., \citet[Section~3.7]{Mi2006}) applied to the integral on the left.
Now, note that $t \mapsto \bigexval{ \hat{H} }(t) \defeq \bigip{ \psi(t) }{ \hat{H} \psi(t) }$ along the exact solution to \eqref{ivp:Schroedinger} and $t \mapsto H^{\eps}(q(t),p(t),Q(t),P(t))$ along \eqref{eq:Symp_Hagedorn} are both constant.
So their difference is constant at the initial value---where $\psi(0)$ is the initial Gaussian~\eqref{eq:Schroedinger-IC}:
\begin{equation*}
  \bigexval{ \hat{H} }(t) - H^{\eps}(q(t),p(t),Q(t),P(t))
  = \bigl\langle \hat{H} \bigr\rangle(0) - H^{\eps}(q(0),p(0),Q(0),P(0))
  = O(\eps^{2}).
\end{equation*}
On the other hand, with the classical system~\eqref{eq:Hagedorn} and the classical Hamiltonian $H^{0}(q,p) \defeq p^{2}/2 + V(q)$,
\begin{equation*}
  \bigexval{ \hat{H} }(t) - H^{0}(q^{0}(t),p^{0}(t))
  = \bigl\langle \hat{H} \bigr\rangle(0) - H^{0}(q(0),p(0))
  = O(\eps),
\end{equation*}
because $H^{\eps}(q,p,Q,P) = H^{0}(q,p) + O(\eps)$.

A similar argument works for, e.g., the angular momentum $J(q,p) \defeq q \diamond p$---where $q \diamond p$ denotes the $d \times d$ skew-symmetric matrix defined by $(q \diamond p)_{ij} \defeq q_{j}p_{i} - q_{i}p_{j}$---when the potential $V$ has $\SO(d)$-symmetry by approximating the expectation value $\bigexval{ \hat{J} }$ by the semiclassical angular momentum~\cite{Oh2015b,Oh2015c}:
\begin{equation*}
  J^{\eps}(q, p, Q, P) = q \diamond p + \frac{\eps}{2}\Re(P Q^{*} - Q P^{*}),
\end{equation*}
because both $\bigexval{ \hat{J} }$ and $J^{\eps}$ are invariants.

\subsection{Outline}
\label{ssec:outline}
We prove Theorem~\ref{thm:main} in the rest of the paper.
The main part of the proof is in Section~\ref{sec:proof}, whereas Sections~\ref{sec:HWP} and \ref{sec:time_evolution-HWP} are devoted to some lemmas and propositions needed in Section~\ref{sec:proof}.
Therefore, the reader might want to first skim through Section~\ref{sec:proof} to have an overview of the proof.

Much of what we do is a detailed analysis of the error $\mathcal{Z}_{0}(t;x) \defeq \psi(t,x) - \phi_{0}(t,x)$, i.e., the difference between the exact solution to \eqref{ivp:Schroedinger} and the Gaussian wave packet~\eqref{eq:phi_0-t}.
In fact, the difference between the exact expectation value $\exval{\hat{x}}(t)$ of the position and the position variable $q(t)$ in \eqref{eq:Symp_Hagedorn-qpQP} is, dropping the spatial variables $x$ for brevity,
\begin{align*}
  \exval{\hat{x}}(t) - q(t)
  &= \ip{\psi(t)}{(\hat{x} - q(t)) \psi(t)} \\
  &= \ip{\phi_{0}(t)}{(\hat{x} - q(t))\phi_{0}(t)} + \ip{\phi_{0}(t)}{(\hat{x} - q(t)) \mathcal{Z}_{0}(t)} + \ip{\mathcal{Z}_{0}(t)}{(\hat{x} - q(t)) \phi_{0}(t)} \\
  &\quad + \ip{\mathcal{Z}_{0}(t)}{(\hat{x} - q(t))\mathcal{Z}_{0}(t)}\\
  &= 2\Re\ip{\mathcal{Z}_{0}(t)}{(\hat{x} - q(t)) \phi_{0}(t)} + \ip{\mathcal{Z}_{0}(t)}{(\hat{x} - q(t)) \mathcal{Z}_{0}(t)}.
\end{align*}
Therefore, our analysis boils down to estimates of the above two terms involving the error $\mathcal{Z}_{0}$.
Those lemmas and propositions in Sections~\ref{sec:HWP} and \ref{sec:time_evolution-HWP} mainly concern those key properties of $\mathcal{Z}_{0}$ that are pertinent to our analysis.

\section{Hagedorn Wave Packets}
\label{sec:HWP}
\subsection{Overview}
We first give a brief review of the Hagedorn wave packets~\cite{Ha1980, Ha1981, Hagedorn1985, Ha1998} (see also \cite{CoRo2012} and \cite{Oh2019b}), and then derive the evolution equation satisfied by the Gaussian wave packet~\eqref{eq:phi_0-t} where the parameters evolve in time according to our equations~\eqref{eq:Symp_Hagedorn}.
The evolution equation resembles the Schr\"odinger equation~\eqref{eq:Schroedinger} but differs by a residual term.
We then prove several key properties of the residual term.
Later, in Section~\ref{sec:time_evolution-HWP}, we will find an expression for the error $\mathcal{Z}_{0}(t,x)$ in terms of the residual term analyzed here.

\subsection{The Hagedorn Wave Packets}
Following \citet{Ha1998}, let us define the lowering operator
\begin{equation*}
  \mathscr{A}(q,p,Q,P) \defeq -\frac{\rmi}{\sqrt{2\eps}} \parentheses{ P^{T}(\hat{x} - q) - Q^{T}(\hat{p} - p) }
\end{equation*}
as well as its adjoint or the raising operator
\begin{equation}
  \label{eq:Astar}
  \mathscr{A}^{*}(q,p,Q,P) \defeq \frac{\rmi}{\sqrt{2\eps}} \parentheses{ P^{*}(\hat{x} - q) - Q^{*}(\hat{p} - p) }.
\end{equation}
We refer to the lowering and raising operators collectively as the \textit{ladder operators}.
Note that both are operators on the Schwartz space $\mathscr{S}(\R^{d})$.
It is straightforward to see that they satisfy the following relationship for any $j, k \in \{1, \dots, d\}$:
\begin{equation}
  \label{eq:commutator-As}
  [\mathscr{A}_{j}(q,p,Q,P), \mathscr{A}^{*}_{k}(q,p,Q,P)] = \delta_{jk}.
\end{equation}
It also turns out to be convenient to write the position and momentum operators in terms of the ladder operators as follows:
\begin{equation}
  \label{eq:hatx-q_in_ladder_ops}
  \hat{x} - q = \sqrt{\frac{\eps}{2}}\parentheses{ \overline{Q} \mathscr{A}(q,p,Q,P) + Q \mathscr{A}^{*}(q,p,Q,P) },
\end{equation}
\begin{equation}
  \label{eq:hatp-p_in_ladder_ops}
  \hat{p} - p = \sqrt{\frac{\eps}{2}}\parentheses{ \overline{P} \mathscr{A}(q,p,Q,P) + P \mathscr{A}^{*}(q,p,Q,P) }.
\end{equation}

Let $\N_{0}$ be the set of integers greater than or equal to zero.
Then one can generate a set of functions $\{ \phi_{\mathbf{n}}(q,p,Q,P,S) \}_{\mathbf{n} \in \N_{0}^{d}}$ by recursively defining, for any multi-index $\mathbf{n} = (n_{1}, \dots, n_{d}) \in \N_{0}^{d}$ and $j \in \{1, \dots, d\}$,
\begin{equation}
  \label{eq:phi_n-raised}
  \phi_{\mathbf{n} + \mathbf{e}_{j}}(q,p,Q,P,S;x) \defeq \frac{1}{\sqrt{n_{j} + 1}}\,\mathscr{A}^{*}_{j}(q,p,Q,P) \phi_{\mathbf{n}}(q,p,Q,P,S;x),
\end{equation}
where $\mathbf{e}_{j}$ is the unit vector in $\R^{d}$ whose $j$-th entry is 1; then they also satisfy
\begin{equation*}
  \phi_{\mathbf{n} - \mathbf{e}_{j}}(q,p,Q,P,S;x) \defeq \frac{1}{\sqrt{n_{j}}}\,\mathscr{A}_{j}(q,p,Q,P) \phi_{\mathbf{n}}(q,p,Q,P,S;x).
\end{equation*}

\citet{Ha1980, Ha1981, Hagedorn1985, Ha1998} showed that the set $\{ \phi_{\mathbf{n}}(q,p,Q,P,S) \}_{\mathbf{n} \in \N_{0}^{d}}$ then forms an orthonormal basis for $L^{2}(\R^{d})$, where $\phi_{0}$ is the ``ground state'' here in the sense that
\begin{equation}
  \label{eq:A-phi_0}
  \mathscr{A}(q,p,Q,P) \phi_{0}(q,p,Q,P,S;x)  = 0.
\end{equation}
In fact, one may think of them as a generalization of the Hermite functions, and also can find a unitary operator on $L^{2}(\R^{d})$ that relates each element of the Hagedorn wave packet with the Hermite function of the same index~\cite{Oh2019b}.
Because of this correspondence, we refer to $\phi_{\mathbf{n}}$ as an $|\mathbf{n}|$-th \textit{excited state} with $|\mathbf{n}| \defeq \sum_{i=1}^{d}n_{i}$ for any multi-index $\mathbf{n} \in \N_{0}^{d}$.

We note in passing that \citet{Ha1998} actually constructed an orthonormal basis $\{ \varphi_{\mathbf{n}} \}_{\mathbf{n} \in \N_{0}^{d}}$ without the phase factor starting with the Gaussian
\begin{equation*}
  \varphi_{0}(q, p, Q, P; x) \defeq \frac{(\det Q)^{-1/2}}{(\pi\eps)^{d/4}} \exp\braces{ \frac{{\rm i}}{\eps}\parentheses{ \frac{1}{2}(x - q)^{T}P Q^{-1}(x - q) + p \cdot (x - q) } }
\end{equation*}
instead of $\phi_{0}$ from \eqref{eq:phi_0}.
It is just a matter of convenience that we use the basis $\{ \phi_{\mathbf{n}} \}_{\mathbf{n} \in \N_{0}^{d}}$ with the phase factor instead of $\{ \varphi_{\mathbf{n}} \}_{\mathbf{n} \in \N_{0}^{d}}$.

\subsection{Evolution Equation of the Gaussian Wave Packet}
\citet{Ha1980, Ha1981, Hagedorn1985, Ha1998} and \citet{HaJo1999,HaJo2000} have proved error estimates of various approximations to the solution to the Schr\"odinger equation~\eqref{eq:Schroedinger} constructed by taking linear combinations of the Hagedorn wave packets.
In their works, each wave packet evolves in time according to \eqref{eq:Hagedorn}, and one of the key ideas of these estimates is to find the Schr\"odinger-type equation satisfied by those wave packets and identify the residual term that accounts for the difference from the Schr\"odinger equation~\eqref{eq:Schroedinger}.

Following their approach, we would like to first find the Schr\"odinger-type equation satisfied by the time-dependent Gaussian wave packet~\eqref{eq:phi_0-t}.
The resulting residual term slightly differs from Hagedorn's because of the $O(\eps)$ correction term:
\begin{lemma}
  \label{lem:Schroedinger-phi_0}
  Consider the Gaussian $\phi_{0}(t,x)$ from \eqref{eq:phi_0-t} whose time-dependent parameters satisfy \eqref{eq:Symp_Hagedorn}.
  Then it satisfies the Schr\"odinger-type equation
  \begin{equation}
    \label{eq:Schroedinger-phi_0}
    \rmi\,\eps \pd{}{t}\phi_{0}(t,x) = \hat{H} \phi_{0}(t,x) + \eps^{3/2}\zeta_{0}(t,x),
  \end{equation}
  where we defined the residual term
  \begin{equation}
    \label{eq:zeta_0}
    \zeta_{0}(t,x) \defeq \alpha(q(t), Q(t); x)\,\phi_{0}(t,x)
  \end{equation}
  with
  \begin{equation}
    \label{eq:alpha}
    \alpha(q, Q; x)
    \defeq \eps^{-1/2}\partial_{q}V^{(1)}(q, Q) \cdot (x - q)
    + \eps^{-3/2} \parentheses{
      \sum_{k=0}^{2} \frac{1}{k!} D^{k}V(q) \cdot (x - q)^{k} - V(x)
    }.
  \end{equation}
  Furthermore, we may split $\alpha$ as
  \begin{subequations}
    \label{eq:alphas}
    \begin{equation}
      \alpha(q, Q; x) = \alpha^{(0)}(q, Q; x) + \eps^{1/2}\alpha^{(1)}(q; x)
    \end{equation}
    with
    \begin{align}
      \label{eq:alpha0}
      \alpha^{(0)}(q, Q; x)
      &\defeq \eps^{-1/2}\partial_{q}V^{(1)}(q, Q) \cdot (x - q) - \frac{\eps^{-3/2}}{3!} D^{3}V(q) \cdot (x - q)^{3} \nonumber\\
      &= \partial_{q}V^{(1)}(q, Q) \cdot \xi - \frac{1}{6} D^{3}V(q) \cdot \xi^{3},
      \\
      \alpha^{(1)}(q; x)
      &\defeq \eps^{-2} \parentheses{
        \sum_{k=0}^{3} \frac{1}{k!} D^{k}V(q) \cdot (x - q)^{k} - V(x)
        } \nonumber\\
       &= \frac{\eps^{-2}}{4!} D^{4}V(\sigma_{1}(x,q)) \cdot (x - q)^{4} \nonumber\\ 
      &= \frac{1}{4!} D^{4}V(\sigma_{1}(x,q)) \cdot \xi^{4}.
    \end{align}
  \end{subequations}
  Note that we set $\xi \defeq \eps^{-1/2}(\hat{x} - q)$ and $D^{0}V \defeq V$, and that $\sigma_{1}(x,q)$ is a point in the segment joining $x$ and $q$; we also used the shorthand $\xi^{m}$ with $m \in \N_{0}$ for the $m$-tensor defined as $\xi^{m}_{i_{1} \dots i_{m}} \defeq \xi_{i_{1}} \cdots \xi_{i_{m}}$ as well as
  \begin{equation*}
    D^{m}V(q) \cdot \xi^{m} \defeq D^{m}_{i_{1} \dots i_{m}}V(q) \xi^{m}_{i_{1} \dots i_{m}}
  \end{equation*}
  with Einstein's summation convention on repeated indices.
\end{lemma}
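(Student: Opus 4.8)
The plan is to verify \eqref{eq:Schroedinger-phi_0} by computing both sides directly and reading off the residual. Write the Gaussian as $\phi_{0}(t,x) = N(Q(t))\,e^{\Phi(t,x)}$ with prefactor $N(Q) = (\det Q)^{-1/2}(\pi\eps)^{-d/4}$ and phase $\Phi = \frac{\rmi}{\eps}\bigl(\frac{1}{2}(x-q)^{T}PQ^{-1}(x-q) + p\cdot(x-q) + S\bigr)$. I would first evaluate $\rmi\eps\,\partial_{t}\phi_{0} = \rmi\eps\,(\dot N/N + \dot\Phi)\,\phi_{0}$ by the chain rule through the parameters, feeding in the evolution equations \eqref{eq:Symp_Hagedorn}, and then compute $\hat{H}\phi_{0} = -\frac{\eps^{2}}{2}\Delta\phi_{0} + V(x)\phi_{0}$; subtracting the two should leave exactly $\eps^{3/2}\zeta_{0}$.

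For the time derivative it is convenient to set $M \defeq PQ^{-1}$. The constraint $Q^{T}P - P^{T}Q = 0$ makes $M$ symmetric, and $\dot Q = P$, $\dot P = -D^{2}V(q)Q$ give $\dot M = \dot P Q^{-1} - PQ^{-1}\dot Q Q^{-1} = -D^{2}V(q) - M^{2}$. The prefactor contributes $\rmi\eps\,\dot N/N = -\frac{\rmi\eps}{2}\tr(Q^{-1}\dot Q) = -\frac{\rmi\eps}{2}\tr M$. Differentiating the phase with $\dot q = p$ and $\dot S = p^{2}/2 - V(q)$, and multiplying by $\rmi\eps$ (which converts the overall $\rmi/\eps$ into $-1$), I expect to obtain
\[
  \rmi\eps\,\partial_{t}\phi_{0}
  = \Bigl[-\tfrac{\rmi\eps}{2}\tr M + p^{T}M(x-q) + \tfrac{1}{2}(x-q)^{T}\bigl(D^{2}V(q) + M^{2}\bigr)(x-q) - \dot p\cdot(x-q) + \tfrac{p^{2}}{2} + V(q)\Bigr]\phi_{0}.
\]

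On the other side, $\nabla\Phi = \frac{\rmi}{\eps}\bigl(M(x-q)+p\bigr)$ and $\Delta\Phi = \frac{\rmi}{\eps}\tr M$, so $\Delta\phi_{0} = \bigl(\Delta\Phi + (\nabla\Phi)^{T}\nabla\Phi\bigr)\phi_{0}$ yields
\[
  \hat{H}\phi_{0} = \Bigl[-\tfrac{\rmi\eps}{2}\tr M + p^{T}M(x-q) + \tfrac{1}{2}(x-q)^{T}M^{2}(x-q) + \tfrac{p^{2}}{2} + V(x)\Bigr]\phi_{0}.
\]
Subtracting, the $\tr M$, $p^{T}M(x-q)$, $M^{2}$-quadratic, and $p^{2}/2$ terms cancel by construction, leaving $\tfrac{1}{2}(x-q)^{T}D^{2}V(q)(x-q) - \dot p\cdot(x-q) + V(q) - V(x)$. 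Now substituting $\dot p = -DV(q) - \eps\,\partial_{q}V^{(1)}(q,Q)$ from \eqref{eq:Symp_Hagedorn-qpQP} turns $-\dot p\cdot(x-q)$ into $DV(q)\cdot(x-q) + \eps\,\partial_{q}V^{(1)}(q,Q)\cdot(x-q)$, so that the residual becomes $\bigl(\sum_{k=0}^{2}\frac{1}{k!}D^{k}V(q)\cdot(x-q)^{k} - V(x)\bigr)\phi_{0} + \eps\,\partial_{q}V^{(1)}(q,Q)\cdot(x-q)\,\phi_{0}$. Factoring out $\eps^{3/2}$ reproduces $\eps^{3/2}\alpha\,\phi_{0}$ with $\alpha$ as in \eqref{eq:alpha}; note that the correction term in $\dot p$ is precisely what supplies the first summand of $\alpha$, which is absent from Hagedorn's classical residual.

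Finally, for the splitting \eqref{eq:alphas} I would apply Taylor's theorem with the Lagrange form of the remainder to expand $V(x)$ about $q$ through third order, writing $\sum_{k=0}^{2}\frac{1}{k!}D^{k}V(q)\cdot(x-q)^{k} - V(x)$ as the cubic term $-\frac{1}{3!}D^{3}V(q)\cdot(x-q)^{3}$ together with a fourth-order remainder evaluated at a point $\sigma_{1}$ on the segment joining $x$ and $q$; collecting the $\eps^{-1/2}$- and $\eps^{-3/2}$-homogeneous pieces and rewriting them through $\xi = \eps^{-1/2}(x-q)$ gives $\alpha^{(0)}$, while the remainder carries an extra factor $\eps^{1/2}$ and furnishes $\eps^{1/2}\alpha^{(1)}$. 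The hypothesis $V \in C^{4}$ is exactly what this remainder requires. I expect the main obstacle to be bookkeeping rather than conceptual: carrying out the chain-rule differentiation of the phase and the Laplacian of the Gaussian cleanly, and confirming that every term other than the potential mismatch and the correction-induced linear term cancels. This cancellation is where the symmetry of $M$ and the precise forms of $\dot Q$ and $\dot P$ are indispensable.
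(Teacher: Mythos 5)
Your proposal is correct and follows essentially the same route as the paper: a direct computation of $\rmi\eps\,\partial_{t}\phi_{0} - \hat{H}\phi_{0}$ using the evolution equations \eqref{eq:Symp_Hagedorn} (the paper organizes the cancellation by grouping terms such as $(\dot{q}-p)^{T}PQ^{-1}(x-q)$ and $\dot{P}Q^{-1} - PQ^{-1}\dot{Q}Q^{-1} + (PQ^{-1})^{2}$ rather than via the Riccati equation for $M = PQ^{-1}$, but the calculation is the same), followed by Taylor's theorem with Lagrange remainder for the splitting into $\alpha^{(0)}$ and $\alpha^{(1)}$. Your observation that the correction term in $\dot{p}$ is exactly what produces the first summand of $\alpha$ matches the paper's Remark~\ref{rem:Hagedorn1}.
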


\begin{remark}
  Those terms with $D^{3}V$ in $\alpha^{(0)}$ and $\alpha^{(1)}$ cancel with each other in $\alpha$, but as we shall see below, splitting the terms in $\alpha$ in this manner is crucial for us as we shall see in the next subsection.
\end{remark}

\begin{proof}[Proof of Lemma~\ref{lem:Schroedinger-phi_0}]
  It follows from tedious but straightforward calculations:
  Dropping the time variable $t$ for brevity in the calculations, we have, using \eqref{eq:Symp_Hagedorn},
  \begin{align*}
    \eps^{3/2} \alpha(q(t), Q(t); x)
    &= \parentheses{ \DS \rmi\,\eps \pd{}{t}\phi_{0}(t,x) - \hat{H} \phi_{0}(t,x) } \bigg/ \phi_{0}(t,x) \\
    &= \parentheses{ \dot{q} - p }^{T} P Q^{-1} (x - q)
    - \frac{1}{2}(x - q)^{T}\parentheses{ \dot{P} Q^{-1} - P Q^{-1} \dot{Q} Q^{-1} + (P Q^{-1})^{2} } (x - q)
    \\
    &\quad - \dot{p} \cdot (x - q) + p \cdot \parentheses{ \dot{q} - p } - V(x) + V(q)
    \\
    &\quad + \parentheses{ \frac{p^{2}}{2} - V(q) - \dot{S} }
      - \frac{\rmi}{2} \eps \parentheses{ \tr(Q^{-1} \dot{Q})  - \tr(P Q^{-1}) }
    \\
    &= \eps\,\partial_{q}V^{(1)}(q,Q) \cdot (x - q) \\
    &\quad+ V(q) + DV(q) \cdot (x - q) + \frac{1}{2} D^{2}V(q) \cdot (x - q)^{2} - V(x),
  \end{align*}
  which gives \eqref{eq:alpha}.
  We may then split $\alpha$ as follows:
  \begin{align*}
    \alpha(q, Q; x)
    &= \eps^{-1/2}\,\partial_{q}V^{(1)}(q,Q) \cdot (x - q) - \frac{\eps^{-3/2}}{3!} D^{3}V(q) \cdot (x - q)^{3} \\
    &\quad + \eps^{-3/2}\parentheses{ \sum_{k=0}^{3} \frac{1}{k!} D^{k}V(q) \cdot (x - q)^{k} - V(x) } \\
    &= \alpha^{(0)}(q, Q; x) + \eps^{1/2}\alpha^{(1)}(q; x).
  \end{align*}
  The second expression for $\alpha^{(1)}$ follows from Taylor's Theorem because $V$ is of class $C^{4}$.
\end{proof}

\subsection{Properties of the Residual Term}
Let us prove some key properties of the residual term $\zeta_{0}$ as lemmas.
These lemmas show why we split $\alpha$ into $\alpha^{(0)}$ and $\alpha^{(1)}$ as shown in \eqref{eq:alphas}.
\begin{lemma}
  \label{lem:zeta_0-estimate}
  Under the assumptions on the potential $V$ from Theorem~\ref{thm:main}, we have the following estimates for the residual term $\zeta_{0}$ defined in \eqref{eq:zeta_0}:
  
  \begin{enumerate}[(i)]
  \item $\norm{ \alpha^{(0)}(q(t), Q(t);\,\cdot\,) \phi_{0}(t,\,\cdot\,) } = O(1)$;
    \label{lem:zeta_0-estimate1}
    \medskip
  \item $\norm{ \alpha^{(1)}(q(t);\,\cdot\,) \phi_{0}(t,\,\cdot\,) } = O(1)$;
    \label{lem:zeta_0-estimate2}
    \medskip
  \item $\norm{ \zeta_{0}(t,\,\cdot\,) } = O(1)$.
    \label{lem:zeta_0-estimate3}
    \medskip
  \item $\norm{ \hat{\xi}_{i}(t)\,\zeta_{0}(t,\,\cdot\,) } = O(1)$ for any $i \in \{1, \dots, d\}$ with $\hat{\xi}(t) \defeq \eps^{-1/2}(\hat{x} - q(t))$.
    \label{lem:zeta_0-estimate4}
    \medskip
  \item $\norm{ \hat{\eta}_{i}(t)\,\zeta_{0}(t,\,\cdot\,) } = O(1)$ for any $i \in \{1, \dots, d\}$ with $\hat{\eta}(t) \defeq \eps^{-1/2}(\hat{p} - p(t))$.
    \label{lem:zeta_0-estimate5}
  \end{enumerate}
\end{lemma}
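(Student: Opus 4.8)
The plan is to reduce all five estimates to two basic mechanisms, treating the two summands $\alpha^{(0)}$ and $\alpha^{(1)}$ of \eqref{eq:alphas} separately—which is precisely why that splitting was made. As a preliminary I would record that, for each fixed $t$, the trajectory $(q(t),p(t),Q(t),P(t))$ stays in a bounded set uniformly for small $\eps$: the $(Q,P)$-equations are linear in $(Q,P)$ with coefficient $D^{2}V(q)$, which is bounded by hypothesis, while $DV$ grows at most linearly for the same reason, so the solution exists on any fixed time interval and converges as $\eps\to0$ to the classical one. Consequently $D^{3}V(q(t))$, $\partial_{q}V^{(1)}(q(t),Q(t))=\tfrac14\tr\bigl(Q Q^{*}\,D^{3}V(q)\bigr)$, and the entries of $Q(t),P(t)$ are all $O(1)$. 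Mechanism (A), for \emph{polynomial} factors, is then as follows: rescaling \eqref{eq:hatx-q_in_ladder_ops} and \eqref{eq:hatp-p_in_ladder_ops} by $\eps^{-1/2}$ gives $\hat\xi=\tfrac1{\sqrt2}(\overline{Q}\mathscr{A}+Q\mathscr{A}^{*})$ and $\hat\eta=\tfrac1{\sqrt2}(\overline{P}\mathscr{A}+P\mathscr{A}^{*})$; since $\mathscr{A}\phi_{0}=0$ by \eqref{eq:A-phi_0} and $\mathscr{A}^{*}$ raises indices by \eqref{eq:phi_n-raised}, any polynomial in $\hat\xi$ (or $\hat\eta$) of degree $m$ with $x$-independent, $O(1)$ coefficients, applied to $\phi_{0}$, normal-orders via the commutators \eqref{eq:commutator-As} into a \emph{finite} combination $\sum_{|\mathbf{n}|\le m}c_{\mathbf{n}}\phi_{\mathbf{n}}$ whose coefficients $c_{\mathbf{n}}$ are polynomials in the entries of $Q,\overline{Q}$ (resp.\ $P,\overline{P}$) and the given $O(1)$ coefficients. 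Orthonormality of $\{\phi_{\mathbf{n}}\}$ then yields an $O(1)$ norm. This immediately disposes of claim~\eqref{lem:zeta_0-estimate1}, because $\alpha^{(0)}$ in \eqref{eq:alpha0} is exactly such a polynomial of degree $\le 3$ in $\hat\xi$, with coefficients evaluated \emph{along the trajectory} and hence $x$-independent.

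Mechanism (B) handles the \emph{remainder} factor. Claim~\eqref{lem:zeta_0-estimate2} cannot use (A), because the coefficient $D^{4}V(\sigma_{1}(x,q))$ in \eqref{eq:alphas} depends on the integration variable $x$. Instead I would compute the norm directly: substituting $\xi=\eps^{-1/2}(x-q)$ and using $\Im(PQ^{-1})=(QQ^{*})^{-1}$ turns $\abs{\phi_{0}}^{2}\,dx$ into the \emph{$\eps$-independent} Gaussian measure $\pi^{-d/2}\det(QQ^{*})^{-1/2}e^{-\xi^{T}(QQ^{*})^{-1}\xi}\,d\xi$, so that
\[
\norm{\alpha^{(1)}\phi_{0}}^{2}=\frac{\det(QQ^{*})^{-1/2}}{\pi^{d/2}}\int_{\R^{d}}\Bigl|\tfrac{1}{4!}D^{4}V(q+\theta\sqrt{\eps}\,\xi)\cdot\xi^{4}\Bigr|^{2} e^{-\xi^{T}(QQ^{*})^{-1}\xi}\,d\xi,
\]
with $\theta=\theta(\xi)\in[0,1]$ coming from $\sigma_{1}$. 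As $\eps\to0$ the integrand converges pointwise to $\bigl|\tfrac1{4!}D^{4}V(q)\cdot\xi^{4}\bigr|^{2}e^{-\xi^{T}(QQ^{*})^{-1}\xi}$, a finite Gaussian fourth moment; I would upgrade this to a genuine $O(1)$ bound by splitting the domain into the bulk $\abs{\xi}\le\eps^{-1/4}$, where $\sigma_{1}$ lies in the fixed unit ball about $q$ so $\abs{D^{4}V(\sigma_{1})}$ is controlled by its finite local supremum, and the tail $\abs{\xi}>\eps^{-1/4}$, where the Gaussian weight is super-exponentially small and absorbs the growth of the continuous function $D^{4}V$. Claim~\eqref{lem:zeta_0-estimate3} is then immediate from \eqref{lem:zeta_0-estimate1}, \eqref{lem:zeta_0-estimate2}, and the triangle inequality applied to $\alpha=\alpha^{(0)}+\eps^{1/2}\alpha^{(1)}$.

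Claims~\eqref{lem:zeta_0-estimate4} and \eqref{lem:zeta_0-estimate5} combine the two mechanisms. Since $\hat{x}$ acts by multiplication, $\hat\xi_{i}$ multiplies $\zeta_{0}=\alpha\phi_{0}$ simply by $\xi_{i}$, so $\hat\xi_{i}\zeta_{0}=\xi_{i}\alpha^{(0)}\phi_{0}+\eps^{1/2}\xi_{i}\alpha^{(1)}\phi_{0}$, to which (A) and (B) apply verbatim with one extra power of $\xi$; this gives \eqref{lem:zeta_0-estimate4}. For \eqref{lem:zeta_0-estimate5} I would use the product rule $\hat\eta_{i}(\alpha\phi_{0})=\alpha\,(\hat\eta_{i}\phi_{0})-\rmi\eps^{1/2}(\partial_{x_{i}}\alpha)\,\phi_{0}$. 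In the first term $\hat\eta_{i}\phi_{0}=\tfrac1{\sqrt2}(P\mathscr{A}^{*})_{i}\phi_{0}$ is an $O(1)$ combination of first excited states, and multiplying it by $\alpha^{(0)}$ (mechanism (A), now acting on $\phi_{\mathbf{e}_{j}}$) and by $\eps^{1/2}\alpha^{(1)}$ (mechanism (B), now against the Hermite--Gaussian weight $\abs{\phi_{\mathbf{e}_{j}}}^{2}$) is again $O(1)$. The delicate point is the second term: naively differentiating $\alpha^{(1)}$ would require $D^{5}V$, which need not exist when $V\in C^{4}$. I would instead differentiate the \emph{remainder form} in \eqref{eq:alphas}, producing the second-order Taylor polynomial of $D_{i}V$ at $q$ minus $D_{i}V(x)$, i.e.\ a third-order Lagrange remainder $-\tfrac1{3!}\eps^{-1/2}D^{4}_{i\,\cdots}V(\sigma_{2})\cdot\xi^{3}$ with $\sigma_{2}$ on the segment (legitimate since $D_{i}V\in C^{3}$); after the $\eps^{1/2}$ prefactor this is once more of the form handled by (B), while the corresponding derivative of $\alpha^{(0)}$ is a degree-$2$ polynomial in $\xi$ covered by (A).

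I expect the main obstacle to be mechanism (B)—specifically, converting the pointwise convergence $D^{4}V(q+\theta\sqrt{\eps}\,\xi)\to D^{4}V(q)$ into a bound uniform in $\eps$, since the hypotheses of Theorem~\ref{thm:main} control only $D^{2}V$ and not the growth of $D^{4}V$. The bulk/tail decomposition above, which exploits the $O(\sqrt{\eps})$ width of the Gaussian together with the continuity of $D^{4}V$, is the crux, and the same device reappears in the $\alpha^{(1)}$-parts of \eqref{lem:zeta_0-estimate4} and \eqref{lem:zeta_0-estimate5}. A secondary subtlety, already noted, is respecting the $C^{4}$ regularity in \eqref{lem:zeta_0-estimate5} by differentiating the Taylor remainder rather than the explicit $D^{4}V$ expression.
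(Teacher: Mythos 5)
Your overall architecture matches the paper's: split $\alpha=\alpha^{(0)}+\eps^{1/2}\alpha^{(1)}$, handle $\alpha^{(0)}$ as a polynomial in $\xi$ against the Gaussian (your mechanism (A) is the ladder-operator version of the paper's change of variables $x\mapsto\xi$), get (iii) by the triangle inequality, (iv) by appending one more power of $\xi$, and (v) via the product rule, differentiating the Taylor-remainder form so as not to exceed $C^{4}$ regularity --- this reproduces the paper's identity \eqref{eq:eta_zeta} with $\beta_i^{(0)},\beta_i^{(1)}$. Your preliminary remark on $\eps$-uniform boundedness of the trajectory is a reasonable addition that the paper leaves implicit.

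However, there is a genuine gap in mechanism (B), precisely at the point you yourself flag as the crux. On the tail $\abs{\xi}>\eps^{-1/4}$ you keep the Lagrange-remainder form $\tfrac{1}{4!}D^{4}V(\sigma_{1}(x,q))\cdot\xi^{4}$ and assert that the Gaussian weight ``absorbs the growth of the continuous function $D^{4}V$.'' The hypotheses of Theorem~\ref{thm:main} impose \emph{no} growth bound on $D^{4}V$: boundedness of $D^{2}V$ is compatible with $D^{3}V$ and $D^{4}V$ growing faster than any function the Gaussian can dominate (e.g.\ $D^{2}V(x)=\sin\bigl(e^{x^{2}}\bigr)$ in $d=1$ is bounded, while $D^{4}V$ grows like $e^{2x^{2}}$, which already defeats the weight $e^{-c\abs{x-q}^{2}/\eps}$ for small $\eps$, let alone faster examples). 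So the tail integral as you have set it up need not even be finite. The repair --- which is what the paper does --- is to use a \emph{different representation of $\alpha^{(1)}$ on the two regions}: on a fixed ball $\bar{\mathbb{B}}_{r}(q(t))$ use the remainder form and bound $D^{4}V(\sigma_{1})$ by its supremum over a compact set; on the complement revert to the explicit difference
\begin{equation*}
  \alpha^{(1)}(q;x)=\eps^{-2}\parentheses{\sum_{k=0}^{3}\frac{1}{k!}D^{k}V(q)\cdot(x-q)^{k}-V(x)},
\end{equation*}
where the bounded-Hessian hypothesis gives $\abs{V(x)}\le C(1+\abs{x}^{2})$ and the Taylor polynomial is an explicit cubic, so the tail contribution is $\eps^{-2}$ times a fixed polynomial integrated against the Gaussian over $\{\abs{x-q}>r\}$, which is $o(\eps^{N})$ for every $N$. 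The same correction is needed wherever you reuse mechanism (B), in particular for the $\beta_{i}^{(1)}$-type term in (v), where the paper likewise keeps the explicit difference form rather than the $D^{4}V(\sigma_{2})$ remainder.
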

\begin{proof}
  To prove~(\ref{lem:zeta_0-estimate1}), notice that it is the square root of the integral with respect to $x$ of
  \begin{equation*}
    |\phi_{0}(t,x)|^{2} = \frac{ |\det Q(t)|^{-1} }{ (\pi\eps)^{d/2} } \exp\parentheses{ -\frac{1}{\eps}(x - q(t))^{T}(Q(t) Q(t)^{*})^{-1}(x - q(t)) }
  \end{equation*}
  multiplied by a polynomial of $\eps^{-1/2}(x - q(t))$.
  It is straightforward to see that, by performing the integral using the change of variables from $x$ to $\xi \defeq \eps^{-1/2}(x - q(t))$, the integral does not depend on $\eps$.
  More specifically, for any $\mathbf{m} = (m_{1}, \dots, m_{d}) \in \N_{0}^{d}$,
  \begin{equation*}
    \norm{ \xi_{1}^{m_{1}} \dots \xi_{d}^{m_{d}} \, \phi_{0}(t,\,\cdot\,) } = O(1), 
  \end{equation*}
  and hence it follows that $\norm{ \alpha^{(0)}(q(t), Q(t);\,\cdot\,) \phi_{0}(t,\,\cdot\,) } = O(1)$; see, e.g., \citet[Eq.~(3.30)]{Ha1998} for an equivalent statement.
  
  For~(\ref{lem:zeta_0-estimate2}), we mimic the proof of Theorem~2.9 of \citet{Ha1998}.
  Take an arbitrarily small $r > 0$ and let $\bar{\mathbb{B}}_{r}(q(t)) \subset \R^{d}$ be the closed ball with radius $r$ centered at $q(t)$.
  Then, 
  \begin{equation*}
    \alpha^{(1)}(q(t); x) = \mathbf{1}_{\bar{\mathbb{B}}_{r}(q(t))}(x)\, \alpha^{(1)}(q(t); x) + \mathbf{1}_{\bar{\mathbb{B}}_{r}(q(t))^{\rm c}}(x)\, \alpha^{(1)}(q(t); x),
  \end{equation*}
  where $\mathbf{1}_{A}$ stands for the characteristic function for a subset $A \subset \R^{d}$.
  Since $V$ is of class $C^{4}$, there exists a 4-tensor-valued time-dependent function $F_{r}(t)$ such that, for any $x \in \bar{\mathbb{B}}_{r}(q(t))$,
  \begin{equation*}
    \abs{ \alpha^{(1)}(q(t); x) }
    = \abs{ \frac{1}{4!} D^{4}V(\sigma_{1}(x,q(t))) \cdot \xi^{4} }
    \le \abs{ F_{r}(t) \cdot \xi^{4} }.
  \end{equation*}
  On the other hand, there exist $C_{2}(t) > 0$ and a polynomial $\mathcal{P}(x)$ such that, for any $x \in \bar{\mathbb{B}}_{r}(q(t))^{\rm c}$, 
  \begin{equation*}
    \abs{ \alpha^{(1)}(q(t); x) } \le \eps^{-2} C_{2}(t) \mathcal{P}(x)
  \end{equation*}
  because the boundedness assumption on the Hessian $D^{2}V$ implies that the potential $V$ is dominated by a quadratic function on $\bar{\mathbb{B}}_{r}(q(t))^{\rm c}$, and also the rest of $\alpha^{(1)}(q(t); x)$ is cubic in $x$.
  Therefore,
  \begin{align*}
    \abs{ \alpha^{(1)}(q(t); x) \phi_{0}(t,x) }
    &= \mathbf{1}_{\bar{\mathbb{B}}_{r}(q(t))}(x)\,\abs{ \alpha^{(1)}(q(t); x) \phi_{0}(t,x) }
      + \mathbf{1}_{\bar{\mathbb{B}}_{r}(q(t))^{\rm c}}(x)\, \abs{ \alpha^{(1)}(q(t); x) \phi_{0}(t,x) } \\
    &\le \mathbf{1}_{\bar{\mathbb{B}}_{r}(q(t))}(x)\, \abs{ F_{r}(t) \cdot \xi^{4} \phi_{0}(t,x) }
     + \eps^{-2} C_{2}(t) \mathbf{1}_{\bar{\mathbb{B}}_{r}(q(t))^{\rm c}}(x)\, \mathcal{P}(x)\,\abs{ \phi_{0}(t,x) }.
  \end{align*}
  However, the norm of the first term is $O(1)$ following the same argument as in \eqref{lem:zeta_0-estimate1}, whereas the norm of the second term is $o(\eps^{r})$ for any real $r$ effectively canceling $\eps^{-2}$ in the coefficient.
  Hence $\norm{ \alpha^{(1)}(q(t);\,\cdot\,) \phi_{0}(t,\,\cdot\,) } = O(1)$.

  The estimate in (\ref{lem:zeta_0-estimate3}) follows easily from (\ref{lem:zeta_0-estimate1}) and (\ref{lem:zeta_0-estimate2}):
  \begin{equation*}
     \norm{ \zeta_{0}(t,\,\cdot\,) }
     \le \bigl\| \alpha^{(0)}(q(t), Q(t);\,\cdot\,) \phi_{0}(t,\,\cdot\,) \bigr\|
     + \eps^{1/2} \bigl\| \alpha^{(1)}(q(t);\,\cdot\,) \phi_{0}(t,\,\cdot\,) \bigr\| = O(1).
   \end{equation*}

   The estimate in (\ref{lem:zeta_0-estimate4}) holds similarly because the above estimates do not change upon multiplying $\phi_{0}(t,x)$ by $\hat{\xi}_{i}(t) \defeq \eps^{-1/2}(\hat{x} - q(t))_{i}$.
   
   The estimate in (\ref{lem:zeta_0-estimate5}) holds because straightforward calculations (see Appendix~\ref{sseca:zeta_0-estimate5}) show that
   \begin{equation}
     \label{eq:eta_zeta}
     \hat{\eta}_{i}(t) \zeta_{0}(t,x) = \beta_{i}(q(t),Q(t);x) \phi_{0}(t,x) + (P(t) Q(t)^{-1})_{ij} \xi_{j}(t) \zeta_{0}(t,x),
   \end{equation}
   where $\beta_{i}(q,Q; x) \defeq \beta_{i}^{(0)}(q,Q;x) + \eps^{1/2}\beta_{i}^{(1)}(q;x)$ with
   \begin{equation}
     \label{eq:betas}
     \begin{split}
       \beta_{i}^{(0)}(q,Q;x) &\defeq -\rmi \parentheses{
         \partial_{q_{i}}V^{(1)}(q,Q) - \frac{1}{2} D^{3}_{ijk}V(q) \xi^{2}_{jk}
       }, \\
       \beta_{i}^{(1)}(q;x) &\defeq - \rmi\,\eps^{-3/2} \parentheses{
         D_{i}V(q) + D^{2}_{ij}V(q) (x - q)_{j} + \frac{1}{2} D^{3}_{ijk}V(q) (x - q)^{2}_{jk} - D_{i}V(x)
       }.
     \end{split}
   \end{equation}
   For the first term on the right-hand side of \eqref{eq:eta_zeta}, notice the similarity between $\beta_{i}$ and $\alpha$; so we can obtain the estimate of the first term essentially the same way we did for $\zeta_{0}$, i.e., its norm is $O(1)$.
   We also know from (\ref{lem:zeta_0-estimate4}) that the norm of the second term in \eqref{eq:eta_zeta} is $O(1)$ as well.
\end{proof}

Furthermore, the first part $\alpha^{(0)} \phi_{0}$ of the residual term $\zeta_{0}$ satisfies the following orthogonality property that later turns out to be crucial:
\begin{lemma}
  \label{lem:orthogonality}
  For any multi-index $\mathbf{k} \in \N_{0}^{d}$ with $0 \le |\mathbf{k}| \le 2$, we have
  \begin{equation*}
    \ip{ \alpha^{(0)}(q, Q; \,\cdot\,)\, \phi_{0}(q, p, Q, P, S; \,\cdot\,) }{ \phi_{\mathbf{k}}(q, p, Q, P, S; \,\cdot\,) } = 0.
  \end{equation*}
  More specifically, $\alpha^{(0)}(q, Q; \,\cdot\,)\, \phi_{0}(q, p, Q, P, S; \,\cdot\,)$ is a linear combination of the third excited states
  \begin{equation*}
    \setdef{ \phi_{\mathbf{n}}(q, p, Q, P, S; \,\cdot\,) }{ \mathbf{n} \in \N_{0}^{d} \text{ with } |\mathbf{n}| = 3}.
  \end{equation*}
\end{lemma}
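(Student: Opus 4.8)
The plan is to expand $\alpha^{(0)}(q,Q;\,\cdot\,)\,\phi_{0}$ in the Hagedorn basis $\{\phi_{\mathbf{n}}\}$ and show that only the level-$3$ coefficients survive; the orthogonality statement is then immediate since $\{\phi_{\mathbf{n}}\}$ is orthonormal. The starting observation is that, on functions, multiplication by $\xi_{i}=\eps^{-1/2}(x-q)_{i}$ coincides with the operator $\hat\xi_{i}\defeq\eps^{-1/2}(\hat{x}-q)_{i}$, which by \eqref{eq:hatx-q_in_ladder_ops} acts as $\hat\xi_{i}=\tfrac{1}{\sqrt2}\parentheses{\overline{Q}_{ij}\mathscr{A}_{j}+Q_{ij}\mathscr{A}^{*}_{j}}$. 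Hence $\alpha^{(0)}(q,Q;\,\cdot\,)\phi_{0}$ is an \emph{odd} cubic polynomial in the ladder operators applied to $\phi_{0}$: the linear term $\partial_{q}V^{(1)}\cdot\hat\xi\,\phi_{0}$ produces only level $1$, while the cubic term $-\tfrac16 D^{3}V\cdot\hat\xi^{3}\phi_{0}$ produces levels $3$ and $1$, because each $\mathscr{A}_{j}$ or $\mathscr{A}^{*}_{j}$ shifts the level by one and $\mathscr{A}\phi_{0}=0$ by \eqref{eq:A-phi_0}. In particular $\alpha^{(0)}\phi_{0}$ has no component on the even levels, which already yields orthogonality to $\phi_{\mathbf{k}}$ for $|\mathbf{k}|\in\{0,2\}$. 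It therefore remains only to prove orthogonality to the first excited states, i.e.\ that the level-$1$ component vanishes.

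To extract the level-$1$ component I would normal-order using the commutator \eqref{eq:commutator-As} together with $\mathscr{A}\phi_{0}=0$. A short Wick-type computation gives the level-$1$ part of $\hat\xi_{i}\hat\xi_{j}\hat\xi_{k}\phi_{0}$ as
\[
  \frac{1}{2\sqrt2}\Big[(QQ^{*})_{jk}Q_{il}+(QQ^{*})_{ik}Q_{jl}+(QQ^{*})_{ij}Q_{kl}\Big]\mathscr{A}^{*}_{l}\phi_{0},
\]
the three terms corresponding to the three ways of contracting one pair among the indices $i,j,k$ through the commutator (each pairing producing a factor $QQ^{*}$) and raising on the remaining index. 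On the other side, differentiating the definition \eqref{eq:V^1} gives $\partial_{q_{i}}V^{(1)}(q,Q)=\tfrac14(QQ^{*})_{ab}D^{3}_{abi}V(q)$, so that the linear term contributes the level-$1$ piece $\tfrac{1}{4\sqrt2}\,D^{3}_{abi}V\,(QQ^{*})_{ab}Q_{il}\,\mathscr{A}^{*}_{l}\phi_{0}$.

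The final step is to contract the displayed level-$1$ part of $\hat\xi^{3}\phi_{0}$ against $-\tfrac16 D^{3}_{ijk}V$ and verify it cancels this linear contribution. Because $D^{3}V$ is totally symmetric, the three contraction terms become equal after relabeling; the only subtlety is that $QQ^{*}$ is Hermitian rather than real, but its skew-symmetric imaginary part is annihilated when contracted against the symmetric $D^{3}V$, so each term reduces to $D^{3}_{ijk}V\,\Re(QQ^{*})_{jk}Q_{il}\,\mathscr{A}^{*}_{l}\phi_{0}$. Summing the three equal contributions turns the prefactor $-\tfrac16$ into $-\tfrac12$ and yields exactly $-\tfrac{1}{4\sqrt2}\,D^{3}_{ijk}V\,(QQ^{*})_{jk}Q_{il}\,\mathscr{A}^{*}_{l}\phi_{0}$, which cancels the linear piece coming from $\partial_{q}V^{(1)}$. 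This shows the level-$1$ component is zero, so $\alpha^{(0)}\phi_{0}$ is a linear combination of the third excited states $\{\phi_{\mathbf{n}}:|\mathbf{n}|=3\}$, establishing both assertions. I expect the main obstacle to be the bookkeeping of the normal ordering and, in particular, confirming that the definition \eqref{eq:V^1} of $V^{(1)}$ supplies precisely the contraction $\tfrac14(QQ^{*})_{ab}D^{3}_{abi}V$ needed for an \emph{exact} cancellation—this is really the structural reason the $O(\eps)$ correction term takes the form it does.
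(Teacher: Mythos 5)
Your proposal is correct and follows essentially the same route as the paper: expand $x-q$ in the ladder operators via \eqref{eq:hatx-q_in_ladder_ops}, normal-order using \eqref{eq:commutator-As} and \eqref{eq:A-phi_0}, and verify that the first-excited-state component of the cubic term exactly cancels the contribution of $\partial_{q}V^{(1)}\cdot\xi$ (your contraction identity $\partial_{q_{i}}V^{(1)}=\tfrac14(QQ^{*})_{ab}D^{3}_{abi}V$ and the prefactor bookkeeping match the paper's Appendix computation, and your remark that the transposition ambiguity in $QQ^{*}$ is harmless under contraction with the symmetric $D^{3}V$ is accurate). The only cosmetic difference is that you dispose of the even levels by a parity argument rather than computing the full level-$3$ coefficient explicitly as the paper does.
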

\begin{proof}
  Substituting \eqref{eq:V^1} into the expression~\eqref{eq:alpha0} for $\alpha^{(0)}$, we have, suppressing the variables in $\alpha^{(0)}$ and $\phi_{0}$ for brevity,
  \begin{equation*}
    \alpha^{(0)} \phi_{0}
    = \parentheses{
      \frac{\eps^{-1/2}}{4} Q_{jl} \overline{Q}_{kl} D^{3}_{ikj}V(q) (x - q)_{i}
      - \frac{\eps^{-3/2}}{3!} D^{3}_{ijk}V(q) (x - q)^{3}_{ijk}
    } \phi_{0}.
  \end{equation*}
  Using \eqref{eq:hatx-q_in_ladder_ops} and noting \eqref{eq:A-phi_0}, the first term becomes
  \begin{align}
    \label{eq:alpha0-1st_term}
    \frac{\eps^{-1/2}}{4} Q_{jl} \overline{Q}_{kl} D^{3}_{ikj}V(q) (x - q)_{i} \phi_{0}
    &= \frac{1}{4\sqrt{2}} D^{3}_{ikj}V(q) Q_{jl} \overline{Q}_{kl} Q_{in} \mathscr{A}^{*}_{n} \phi_{0} \nonumber\\
    &= \frac{1}{4\sqrt{2}} \mathcal{T}_{ijk} \overline{Q}_{il} Q_{jl} Q_{kn} \phi_{\mathbf{e}_{n}},
  \end{align}
  where we used the shorthand $\mathcal{T}_{ijk} \defeq D^{3}_{ijk}V(q)$ and its symmetry with respect to permutations of the indices as well as \eqref{eq:phi_n-raised}.
  On the other hand, after similar but more tedious calculations (see Appendix~\ref{sseca:orthogonality}), the second term becomes
  \begin{equation}
    \label{eq:alpha0-2nd_term}
    -\frac{\eps^{-3/2}}{3!} D^{3}_{ijk}V(q) (x - q)^{3}_{ijk} \phi_{0}
    = -\frac{1}{4\sqrt{3}} \mathcal{T}_{ijk} Q_{il} Q_{jm} Q_{kn} \phi_{\mathbf{e}_{l} + \mathbf{e}_{m} + \mathbf{e}_{n}}
    - \frac{1}{4\sqrt{2}} \mathcal{T}_{ijk} \overline{Q}_{il} Q_{jl} Q_{kn} \phi_{\mathbf{e}_{n}}.
  \end{equation}
  As a result,
  \begin{equation*}
    \alpha^{(0)} \phi_{0}
    = -\frac{1}{4\sqrt{3}} \mathcal{T}_{ijk} Q_{il} Q_{jm} Q_{kn} \phi_{\mathbf{e}_{l} + \mathbf{e}_{m} + \mathbf{e}_{n}},
  \end{equation*}
  which is a linear combination of the third excited states.
\end{proof}

Notice the role played by the first term in $\alpha^{(0)}$ in canceling the term with the first excited states, and recall that this term~\eqref{eq:alpha0-1st_term} came from the $O(\eps)$ correction term in the potential in our dynamics~\eqref{eq:Symp_Hagedorn-qpQP}.

\begin{remark}
  \label{rem:Hagedorn1}
  What if one uses the classical Hamiltonian system for $t \mapsto (q(t),p(t))$ as in \eqref{eq:Hagedorn} of Hagedorn?
  Then the function $\alpha$ in the residual term $\zeta_{0}$ becomes
  \begin{align}
    \label{eq:alpha-Hagedorn}
    \alpha(q,Q;x)
    &= \eps^{-3/2} \parentheses{ \sum_{k=0}^{2} \frac{1}{k!} D^{k}V(q) \cdot (x - q)^{k} - V(x) } \nonumber\\
    &= -\eps^{-3/2} \parentheses{ \frac{1}{3!} D^{3}V(q) \cdot (x - q)^{3} + \frac{1}{4!} D^{4}V(\sigma_{1}(x,q)) \cdot (x - q)^{4} },
  \end{align}
  where $\sigma_{1}(x,q)$ is defined in Lemma~\ref{lem:Schroedinger-phi_0}.
  The absence of the term coming from the correction term indicates that there is no cancellation of those terms involving the first excited states.
  Indeed, as we shall see in Remark~\ref{rem:Hagedorn2} of Section~\ref{ssec:estimate1}, this residual term does not enjoy the same property as ours does; it turns out to be detrimental in the error estimate.
\end{remark}

\section{Time Evolution of the Hagedorn Wave Packets}
\label{sec:time_evolution-HWP}
\subsection{Overview}
While the main focus of the paper is the Gaussian~\eqref{eq:phi_0} and its associated equations~\eqref{eq:Symp_Hagedorn} for the parameters, it turns out that the proof of the main result requires some analysis on the time evolution of some other Hagedorn wave packets as well.
Therefore, in this section, we derive the Schr\"odinger-type evolution equation for the Hagedorn wave packets as opposed to just the Gaussian.

\subsection{Evolution Equation of the Hagedorn Wave Packets}
Lemma~\ref{lem:Schroedinger-phi_0} applies only to the Gaussian wave packet $\phi_{0}$.
However, it turns out that Lemma~\ref{lem:Schroedinger-phi_0} generalizes to $\phi_{\mathbf{n}}$ with any $\mathbf{n} \in \N_{0}^{d}$:
\begin{proposition}
  \label{prop:Schroedinger-phi_n}
  Let us define, for $\mathbf{n} \in \N_{0}^{d}$, 
  \begin{align*}
    \phi_{\mathbf{n}}(t,x) &\defeq \phi_{\mathbf{n}}(q(t), p(t), Q(t), P(t), S(t); x),
  \end{align*}
  where $t \mapsto (q(t), p(t), Q(t), P(t), S(t))$ satisfies \eqref{eq:Symp_Hagedorn}.
  Then $\phi_{\mathbf{n}}(t, x)$ satisfies the Schr\"odinger-type equation
  \begin{equation}
    \label{eq:Schroedinger-phi_n}
    \rmi\,\eps \pd{}{t}\phi_{\mathbf{n}}(t,x) = \hat{H} \phi_{\mathbf{n}}(t,x) + \eps^{3/2}\zeta_{\mathbf{n}}(t,x),
  \end{equation}
  where
  \begin{equation}
    \label{eq:zeta_n}
    \zeta_{\mathbf{n}}(t,x) \defeq \alpha(q(t), Q(t); x) \phi_{\mathbf{n}}(t,x)
  \end{equation}
  with the same $\alpha$ defined in \eqref{eq:alpha}.
\end{proposition}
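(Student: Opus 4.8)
The plan is to prove the statement by induction on the total order $|\mathbf{n}|$, using the base case $|\mathbf{n}| = 0$ supplied by Lemma~\ref{lem:Schroedinger-phi_0} and climbing up the ladder via the recursion $\phi_{\mathbf{n}+\mathbf{e}_j} = (n_j+1)^{-1/2}\mathscr{A}^*_j\phi_{\mathbf{n}}$ from \eqref{eq:phi_n-raised}. It is convenient to introduce the \emph{defect operator} $\mathcal{D} \defeq \rmi\eps\,\pd{}{t} - \hat{H} - \eps^{3/2}\alpha(q(t),Q(t);\,\cdot\,)$ on $\mathscr{S}(\R^{d})$, where $\alpha$ acts by multiplication. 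Then the assertion $\rmi\eps\,\pd{}{t}\phi_{\mathbf{n}} = \hat{H}\phi_{\mathbf{n}} + \eps^{3/2}\alpha\phi_{\mathbf{n}}$ is exactly $\mathcal{D}\phi_{\mathbf{n}} = 0$, and Lemma~\ref{lem:Schroedinger-phi_0} reads $\mathcal{D}\phi_{0} = 0$. So it suffices to show that $\mathcal{D}$ annihilates everything generated from $\phi_{0}$ by the raising operators.

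First I would establish that $\mathcal{D}$ \emph{commutes} with each time-dependent raising operator $\mathscr{A}^*_j$ when applied to these wave packets. Writing out $\mathcal{D}(\mathscr{A}^*_j\phi)$ and factoring $\mathscr{A}^*_j$ out on the left past $\rmi\eps\,\pd{}{t}$, $\hat{H}$, and the multiplication by $\eps^{3/2}\alpha$, one finds $\mathcal{D}(\mathscr{A}^*_j\phi) = \mathscr{A}^*_j\mathcal{D}\phi + \mathcal{R}_j\phi - \eps^{3/2}[\alpha,\mathscr{A}^*_j]\phi$, where $\mathcal{R}_j \defeq \rmi\eps\,\pd{}{t}\mathscr{A}^*_j - [\hat{H},\mathscr{A}^*_j]$ collects the time derivative of the coefficients of $\mathscr{A}^*_j$ together with its commutator with $\hat{H}$. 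Thus the whole proof reduces to the single operator identity
\[
  \rmi\eps\,\pd{}{t}\mathscr{A}^*_j - [\hat{H},\mathscr{A}^*_j] = \eps^{3/2}[\alpha,\mathscr{A}^*_j].
\]

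To verify it I would compute both sides explicitly. The commutator $[\hat{H},\mathscr{A}^*_j]$ follows from the canonical relations $[\hat{p}_k,\hat{x}_l] = -\rmi\eps\,\delta_{kl}$ and $[\hat{p}_k,V(\hat{x})] = -\rmi\eps\,D_kV(\hat{x})$, while $\rmi\eps\,\pd{}{t}\mathscr{A}^*_j$ is obtained by differentiating the coefficients $P^{*}, Q^{*}, q, p$ of $\mathscr{A}^*_j$ and substituting the equations of motion \eqref{eq:Symp_Hagedorn-qpQP}. The decisive simplification is that the $\hat{p}$-terms cancel between these two pieces, so that the left-hand side is a \emph{multiplication operator}; the right-hand side is likewise multiplication by $\sqrt{\eps/2}\,\overline{Q}_{kj}\,\partial_{x_k}\alpha$, since $\alpha$ commutes with $\hat{x}$ and only the $\hat{p}$-part of $\mathscr{A}^*_j$ contributes to $[\alpha,\mathscr{A}^*_j]$. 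Matching the two resulting functions term by term---using the symmetry of the Hessian and the explicit form \eqref{eq:alpha} of $\alpha$---then yields the identity.

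With the identity in hand the last two terms in the expression for $\mathcal{D}(\mathscr{A}^*_j\phi)$ cancel, giving $\mathcal{D}(\mathscr{A}^*_j\phi) = \mathscr{A}^*_j\mathcal{D}\phi$; the induction then closes immediately, since applying $\mathscr{A}^*_j$ to $\mathcal{D}\phi_{\mathbf{n}} = 0$ produces $\mathcal{D}\phi_{\mathbf{n}+\mathbf{e}_j} = 0$ up to the scalar $(n_j+1)^{-1/2}$. I expect the main obstacle to be the bookkeeping in the operator identity rather than any conceptual difficulty: one must track the complex conjugates $\overline{Q}$, $\overline{P}$ carefully and, crucially, observe that the $O(\eps)$ correction $-\eps\,\partial_q V^{(1)}(q,Q)$ in the momentum equation for $\dot{p}$ in \eqref{eq:Symp_Hagedorn-qpQP} is precisely what reproduces the $\eps^{-1/2}\partial_q V^{(1)}$ contribution to $\partial_{x}\alpha$ on the right-hand side, while the remaining pieces assemble into the Taylor-type potential difference already present in $\alpha$. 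This parallels the computation behind Lemma~\ref{lem:Schroedinger-phi_0}, and the more tedious index manipulations can be deferred to an appendix in the style of those already cited.
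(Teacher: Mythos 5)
Your proposal is correct and follows essentially the same route as the paper: induction on $\mathbf{n}$ via the raising-operator recursion \eqref{eq:phi_n-raised}, with the key operator identity $\rmi\eps\,\tpd{}{t}\mathscr{A}^{*}_{j} - [\hat{H},\mathscr{A}^{*}_{j}] = \eps^{3/2}[\alpha,\mathscr{A}^{*}_{j}]$ being exactly Lemma~\ref{lem:time_evolution_of_raising_op} in disguise, since $[\alpha,\mathscr{A}^{*}_{j}] = \sqrt{\eps/2}\,Q^{*}_{jk}\partial_{x_{k}}\alpha$. The defect-operator packaging is a cosmetic difference only; the cancellation of the $\hat{p}$-terms and the role of the $\eps\,\partial_{q}V^{(1)}$ correction are identified correctly.
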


This result follows easily from the following lemma regarding the time evolution of the raising operator:
\begin{lemma}
  \label{lem:time_evolution_of_raising_op}
  Suppose that $t \mapsto (q(t), p(t), Q(t), P(t), S(t))$ satisfies \eqref{eq:Symp_Hagedorn} and let us write
  \begin{equation*}
    \mathscr{A}^{*}(t) \defeq \mathscr{A}^{*}(q(t), p(t), Q(t), P(t)).
  \end{equation*}
  Then its time evolution is governed by
  \begin{equation*}
    \rmi\,\eps \od{}{t}\mathscr{A}^{*}(t) + \bigl[ \mathscr{A}^{*}(t), \hat{H} \bigr]
    = \frac{\eps^{2}}{\sqrt{2}}\, Q^{*}(t) \partial_{x} \alpha(q(t), Q(t); x)
  \end{equation*}
  as operators with domain $\mathscr{S}(\R^{d})$.
\end{lemma}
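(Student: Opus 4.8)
The plan is to verify the identity by computing the two operator-valued summands on the left-hand side separately and showing that their sum telescopes into the right-hand side. Throughout I would work on $\mathscr{S}(\R^{d})$, on which $\mathscr{A}^{*}$, $\hat{x}$, $\hat{p}$, and $V(\hat{x})$ together with its relevant derivatives are all well defined, so that every manipulation is legitimate. First I would differentiate $\mathscr{A}^{*}(t) = \frac{\rmi}{\sqrt{2\eps}}\parentheses{ P^{*}(\hat{x} - q) - Q^{*}(\hat{p} - p) }$ in $t$ and substitute the equations of motion~\eqref{eq:Symp_Hagedorn-qpQP}: $\dot{q} = p$, $\dot{Q} = P$, $\dot{P} = -D^{2}V(q)\,Q$ (hence $\dot{P}^{*} = -Q^{*} D^{2}V(q)$, since the Hessian is real and symmetric), and, crucially, $\dot{p} = -DV(q) - \eps\,\partial_{q}V^{(1)}(q,Q)$. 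After multiplying by $\rmi\,\eps$, the prefactor becomes $-\sqrt{\eps/2}$ and I obtain
\begin{equation*}
  \rmi\,\eps\,\od{}{t}\mathscr{A}^{*}
  = -\sqrt{\tfrac{\eps}{2}}\parentheses{ -Q^{*}D^{2}V(q)(\hat{x} - q) - P^{*}p - P^{*}(\hat{p} - p) + Q^{*}\dot{p} }.
\end{equation*}

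Next I would compute the commutator from the canonical relations $[\hat{x}_{j}, \hat{p}_{k}] = \rmi\,\eps\,\delta_{jk}$, which give $[\hat{x}, \hat{H}] = \rmi\,\eps\,\hat{p}$ and $[\hat{p}, \hat{H}] = [\hat{p}, V(\hat{x})] = -\rmi\,\eps\,DV(\hat{x})$, so that
\begin{equation*}
  [\mathscr{A}^{*}, \hat{H}]
  = -\sqrt{\tfrac{\eps}{2}}\parentheses{ P^{*}\hat{p} + Q^{*}DV(\hat{x}) }.
\end{equation*}

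Adding the two expressions is where the structure reveals itself: the three $P^{*}$-terms $-P^{*}p$, $-P^{*}(\hat{p} - p)$, and $+P^{*}\hat{p}$ cancel identically, leaving only terms proportional to $Q^{*}$. Substituting $\dot{p} = -DV(q) - \eps\,\partial_{q}V^{(1)}(q,Q)$ and collecting, the surviving bracket inside the $-\sqrt{\eps/2}$ prefactor equals $-Q^{*}\parentheses{ D^{2}V(q)(\hat{x} - q) + DV(q) + \eps\,\partial_{q}V^{(1)}(q,Q) - DV(\hat{x}) }$. Finally I would differentiate the definition~\eqref{eq:alpha} of $\alpha$ directly: the linear term contributes $\eps^{-1/2}\partial_{q}V^{(1)}(q,Q)$ and the truncated Taylor polynomial minus $V(x)$ contributes $\eps^{-3/2}\parentheses{ DV(q) + D^{2}V(q)(x - q) - DV(x) }$, so that $\eps^{3/2}\partial_{x}\alpha$ is precisely the bracketed expression above (with $x$ promoted to the operator $\hat{x}$). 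Hence the surviving terms equal $-\eps^{3/2}Q^{*}\partial_{x}\alpha$ inside the prefactor, and multiplying by $-\sqrt{\eps/2}$ produces the claimed $\frac{\eps^{2}}{\sqrt{2}}Q^{*}\partial_{x}\alpha$.

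The computation is routine once organized this way; the only genuine point, and the step I expect to require care, is the recognition that the leftover $Q^{*}$-terms reassemble into $\eps^{3/2}\partial_{x}\alpha$. This match depends decisively on the $O(\eps)$ correction term in $\dot{p}$: it is exactly the $-\eps\,\partial_{q}V^{(1)}(q,Q)$ contribution that supplies the $\eps^{-1/2}\partial_{q}V^{(1)}$ piece of $\partial_{x}\alpha$, so that without the correction the identity would fail to close. I would also keep track of the fact that $\hat{p}$ and $V(\hat{x})$ are operators that do not commute, so the commutator $[\hat{p}, V(\hat{x})]$ must be handled as a gradient rather than a plain multiplication; this is the one spot where a careless calculation would go wrong.
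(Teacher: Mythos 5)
Your computation is correct and follows essentially the same route as the paper's proof: differentiate $\mathscr{A}^{*}(t)$ using \eqref{eq:Symp_Hagedorn-qpQP}, compute $[\mathscr{A}^{*},\hat{H}]$ from the canonical commutation relations, observe the cancellation of the $P^{*}\hat{p}$ terms, and recognize the surviving $Q^{*}$-terms as $\eps^{3/2}Q^{*}\partial_{x}\alpha$ via \eqref{eq:alpha}. Your closing observation that the $-\eps\,\partial_{q}V^{(1)}$ term in $\dot{p}$ is exactly what supplies the $\partial_{q}V^{(1)}$ piece of $\partial_{x}\alpha$ is also the point the paper emphasizes.
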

\begin{proof}
  It is straightforward to see, in view of \eqref{eq:Symp_Hagedorn} and \eqref{eq:Astar}, that
  \begin{align*}
    \rmi\,\eps \od{}{t}\mathscr{A}^{*}(t)
    &= \sqrt{\frac{\eps}{2}} \parentheses{
      -\dot{P}^{*}(t) (x - q(t)) + P^{*}(t)\dot{q}(t) + \dot{Q}^{*}(t) (\hat{p} - p(t)) - Q^{*}(t) \dot{p}(t)
    } \\
    &= \sqrt{\frac{\eps}{2}} \parentheses{
      Q^{*}(t) D^{2}V(q(t)) (x - q(t)) + P^{*}(t)\hat{p} + Q^{*}(t)\parentheses{ DV(q(t)) + \eps\,\partial_{q} V^{(1)}(q(t), Q(t)) }
    },
  \end{align*}
  whereas, for $j \in \{1, \dots, d\}$, 
  \begin{align*}
    \bigl[ \mathscr{A}^{*}_{j}(q, p, Q, P), \hat{H} \bigr]
    &= \frac{\rmi}{\sqrt{2\eps}}\parentheses{
      \frac{1}{2}P^{*}_{jk}\brackets{ x_{k}, \hat{p}^{2} }
      - Q^{*}_{jk} \brackets{ \hat{p}_{k}, V(x) }
      } \\
    &= -\sqrt{\frac{\eps}{2}} \parentheses{
      P^{*}_{jk}\hat{p}_{k}
      + Q^{*}_{jk} \pd{V}{x_{k}}(x)
      }
  \end{align*}
  and so
  \begin{equation*}
    \bigl[ \mathscr{A}^{*}(t), \hat{H} \bigr]
    = -\sqrt{\frac{\eps}{2}} \parentheses{
      P^{*}(t) \hat{p}
      + Q^{*}(t) D V(x)
      }.
  \end{equation*}
  Therefore,
  \begin{align*}
    \rmi\,\eps& \od{}{t}\mathscr{A}^{*}(t) + \bigl[ \mathscr{A}^{*}(t), \hat{H} \bigr] \\
    &= \sqrt{\frac{\eps}{2}}\, Q^{*}(t) \parentheses{
      D^{2}V(q(t)) (x - q(t)) + DV(q(t)) + \eps\,\partial_{q}V^{(1)}(q(t), Q(t))
      - DV(x)
      } \\
    &= \sqrt{\frac{\eps}{2}}\, Q^{*}(t) \pd{}{x} \Bigl(
      \eps\,\partial_{q} V^{(1)}(q(t), Q(t)) \cdot (x - q(t))
    \\
    &\phantom{ =\sqrt{\frac{\eps}{2}} Q^{*}(t) \pd{}{x} \Bigl(\ }
      + V(q(t))
      + DV(q(t)) \cdot (x - q(t))
      + \frac{1}{2}D^{2}V(q(t)) \cdot (x - q(t))^{2}
      - V(x)
      \Bigr) \\
    &= \frac{\eps^{2}}{\sqrt{2}}\, Q^{*}(t) \partial_{x} \alpha(q(t), Q(t); x). \qedhere
  \end{align*}
\end{proof}

\begin{proof}[Proof of Proposition~\ref{prop:Schroedinger-phi_n}]
  By induction on $\mathbf{n} \in \N_{0}^{d}$.
  Lemma~\ref{lem:Schroedinger-phi_0} shows that the assertion holds for $\mathbf{n} = 0$.
  Let us suppose that the assertion holds for $\mathbf{n} \in \N_{0}^{d}$ and show that it holds for $\mathbf{n} + \mathbf{e}_{j}$ for any $j \in \{1, \dots, d\}$.
  Using \eqref{eq:phi_n-raised} and the above lemma,
  \begin{align*}
    \sqrt{n_{j} + 1}& \parentheses{
                         \rmi\,\eps \pd{}{t}\phi_{\mathbf{n}+\mathbf{e}_{j}}(t, x) - \hat{H} \phi_{\mathbf{n}+\mathbf{e}_{j}}(t, x)
                         }\\
                       &= \rmi\,\eps \pd{}{t}\parentheses{ \mathscr{A}^{*}_{j}(t)\, \phi_{\mathbf{n}}(t, x) } - \hat{H} \mathscr{A}^{*}_{j}(t)\, \phi_{\mathbf{n}}(t, x) \\
                       &= \parentheses{ \rmi\,\eps \od{}{t}\mathscr{A}^{*}_{j}(t) + \bigl[ \mathscr{A}^{*}_{j}(t), \hat{H} \bigr] } \phi_{\mathbf{n}}(t, x)
                         + \mathscr{A}^{*}_{j}(t) \parentheses{ \rmi\,\eps \pd{}{t}\phi_{\mathbf{n}}(t, x) - \hat{H} \phi_{\mathbf{n}}(t, x) } \\
                       &= \frac{\eps^{2}}{\sqrt{2}} Q^{*}_{jk}(t) \pd{}{x_{k}}\alpha(q(t), Q(t); x) \phi_{\mathbf{n}}(t, x)
                         + \eps^{3/2}\mathscr{A}^{*}_{j}(t) \parentheses{ \alpha(q(t), Q(t); x) \phi_{\mathbf{n}}(t, x) }.
  \end{align*}
  However, since every term in $\mathscr{A}^{*}$ (see \eqref{eq:Astar}) except the one with $\hat{p}$ is a multiplication operator,
  \begin{align*}
    \mathscr{A}^{*}(q, p, Q, P)& \parentheses{ \alpha(q, Q; x) \phi_{\mathbf{n}}(q, p, Q, P, S; x) } \\
    &= \alpha(q, Q; x) \mathscr{A}^{*}(q, p, Q, P) \phi_{\mathbf{n}}(q, p, Q, P, S; x) \\
      &\quad - \frac{\rmi}{\sqrt{2\eps}} Q^{*} \hat{p}\parentheses{ \alpha(q, Q; x) } \phi_{\mathbf{n}}(q, p, Q, P, S; x) \\
    &= \sqrt{n_{j} + 1}\,\alpha(q, Q; x) \phi_{\mathbf{n}+\mathbf{e}_{j}}(q, p, Q, P, S; x)
      - \sqrt{\frac{\eps}{2}} Q^{*} \partial_{x} \alpha(q, Q; x) \phi_{\mathbf{n}}(q, p, Q, P, S; x).
  \end{align*}
  Therefore, we obtain
  \begin{equation*}
    \rmi\,\eps \pd{}{t}\phi_{\mathbf{n}+\mathbf{e}_{j}}(t, x) - \hat{H} \phi_{\mathbf{n}+\mathbf{e}_{j}}(t, x)
    = \eps^{3/2} \alpha(q(t), Q(t); x) \phi_{\mathbf{n}+\mathbf{e}_{j}}(t, x). \qedhere
  \end{equation*}
\end{proof}

\subsection{Errors in Wave Functions}
Let us first note that, in what follows, we will suppress the spatial variables $x$ for brevity.
We also note that the assumption that the potential $V$ is bounded from below guarantees that there exists a self-adjoint extension of the Schr\"odinger operator $\hat{H}$ so that the unitary operators of the form $e^{-\rmi\hat{H}t/\eps}$ with $t \in \R$ would make sense.

Now we would like to compare the exact solution $t \mapsto e^{-\rmi\hat{H}(t-s)/\eps} \phi_{\mathbf{n}}(s)$ of the Schr\"odinger equation~\eqref{eq:Schroedinger} and the wave packet $t \mapsto \phi_{\mathbf{n}}(t)$, both with the initial wave function being $\phi_{\mathbf{n}}(s)$ with any $\mathbf{n} \in \N_{0}^{d}$ at time $s \in \R$.
To that end, let us define the difference between them (i.e., error in wave functions): For any $\mathbf{n} \in \N_{0}^{d}$, and any $s, t \in \R$ (for which both $\phi_{\mathbf{n}}(s)$ and $\phi_{\mathbf{n}}(t)$ are defined),
\begin{equation}
  \label{eq:mathcalZ}
  \mathcal{Z}_{\mathbf{n}}(t,s) \defeq e^{-\rmi\hat{H}(t-s)/\eps} \phi_{\mathbf{n}}(s) - \phi_{\mathbf{n}}(t),
  \qquad
  \mathcal{Z}_{\mathbf{n}}(t) \defeq \mathcal{Z}_{\mathbf{n}}(t,0) = e^{-\rmi\hat{H}t/\eps} \phi_{\mathbf{n}}(0) - \phi_{\mathbf{n}}(t).
\end{equation}

The following lemma is critical in finding an estimate of these errors:
\begin{lemma}
  \label{lem:zeta_ell-estimate}
  For any $\mathbf{n} \in \N_{0}^{d}$, $\norm{ \zeta_{\mathbf{n}}(t) } = O(1)$.
\end{lemma}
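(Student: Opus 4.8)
The plan is to reduce the estimate for general $\mathbf{n}$ to the case $\mathbf{n} = 0$ already established in Lemma~\ref{lem:zeta_0-estimate}(\ref{lem:zeta_0-estimate3}). The only structural input needed beyond that lemma is a moment bound for the excited states: for every pair of multi-indices $\mathbf{m}, \mathbf{n} \in \N_{0}^{d}$,
\begin{equation}
  \label{eq:moment-phi_n}
  \norm{ \xi_{1}^{m_{1}} \cdots \xi_{d}^{m_{d}}\, \phi_{\mathbf{n}}(t,\,\cdot\,) } = O(1),
\end{equation}
where $\xi = \eps^{-1/2}(\hat{x} - q(t))$. This is the analogue for $\phi_{\mathbf{n}}$ of the bound recorded for $\phi_{0}$ in the proof of Lemma~\ref{lem:zeta_0-estimate}(\ref{lem:zeta_0-estimate1}). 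To obtain it, I would rewrite the scaled position via \eqref{eq:hatx-q_in_ladder_ops} as $\xi = \frac{1}{\sqrt{2}}\parentheses{ \overline{Q}\mathscr{A} + Q\mathscr{A}^{*} }$, whose coefficients are \emph{independent} of $\eps$. Since the ladder operators send $\phi_{\mathbf{n}}$ to scalar multiples of $\phi_{\mathbf{n} \pm \mathbf{e}_{j}}$ by \eqref{eq:phi_n-raised} with the $\eps$-independent weights $\sqrt{n_{j}}$ and $\sqrt{n_{j}+1}$, each application of a component $\xi_{i}$ maps $\phi_{\mathbf{n}}$ to a finite linear combination of members of the orthonormal basis $\braces{ \phi_{\mathbf{n}'} }$ whose coefficients are bounded in $\eps$ (depending on $t$ through $Q(t)$ only). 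Iterating $|\mathbf{m}|$ times and using orthonormality yields \eqref{eq:moment-phi_n}.

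With \eqref{eq:moment-phi_n} in hand, I would split $\zeta_{\mathbf{n}} = \alpha^{(0)}(q(t),Q(t);\,\cdot\,)\phi_{\mathbf{n}}(t) + \eps^{1/2}\alpha^{(1)}(q(t);\,\cdot\,)\phi_{\mathbf{n}}(t)$ using \eqref{eq:alphas}, exactly as in the $\mathbf{n} = 0$ case. The first term is handled immediately: by \eqref{eq:alpha0}, $\alpha^{(0)}$ is a polynomial in $\xi$ of degree three whose coefficients are $\partial_{q}V^{(1)}(q(t),Q(t))$ and $D^{3}V(q(t))$, both finite and continuous in $t$ (hence $O(1)$ in $\eps$) because $V \in C^{4}$; so $\norm{ \alpha^{(0)}\phi_{\mathbf{n}} } = O(1)$ follows term by term from \eqref{eq:moment-phi_n}. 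The second term requires the same ball-splitting argument used for Lemma~\ref{lem:zeta_0-estimate}(\ref{lem:zeta_0-estimate2}), now carrying along the extra polynomial factor contributed by $\phi_{\mathbf{n}}$.

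For $\alpha^{(1)}\phi_{\mathbf{n}}$ I would decompose $\R^{d}$ into the closed ball $\bar{\mathbb{B}}_{r}(q(t))$ and its complement. On the ball, $C^{4}$-regularity bounds $\abs{ \alpha^{(1)}(q(t);x) }$ by $\abs{ F_{r}(t) \cdot \xi^{4} }$ for a bounded $4$-tensor $F_{r}(t)$, so the restriction of $\norm{ \alpha^{(1)}\phi_{\mathbf{n}} }$ to the ball is controlled by a finite sum of moments \eqref{eq:moment-phi_n} and is $O(1)$. On the complement, boundedness of the Hessian $D^{2}V$ forces $\abs{ \alpha^{(1)}(q(t);x) } \le \eps^{-2}C_{2}(t)\mathcal{P}(x)$ with $\mathcal{P}$ a fixed polynomial; writing $\phi_{\mathbf{n}}(t,x)$ as a polynomial in $\xi$ times $\phi_{0}(t,x)$, the Gaussian $\abs{ \phi_{0}(t,x) }$ restricted to $\abs{ x - q(t) } > r$ decays like $o(\eps^{s})$ for every real $s$, which absorbs both the prefactor $\eps^{-2}$ and all polynomial growth. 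Combining the two regions gives $\norm{ \alpha^{(1)}\phi_{\mathbf{n}} } = O(1)$, and hence $\norm{ \zeta_{\mathbf{n}} } \le \norm{ \alpha^{(0)}\phi_{\mathbf{n}} } + \eps^{1/2}\norm{ \alpha^{(1)}\phi_{\mathbf{n}} } = O(1)$. I expect the only genuinely new ingredient—everything else being a transcription of Lemma~\ref{lem:zeta_0-estimate}—to be the uniform-in-$\eps$ moment bound \eqref{eq:moment-phi_n}; once the $\eps$-independence of the ladder-operator representation of $\xi$ is noted, however, this step is routine.
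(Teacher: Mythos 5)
Your proof is correct and takes essentially the same route as the paper's: both reduce the lemma to the uniform moment bound $\norm{ \xi_{1}^{m_{1}} \cdots \xi_{d}^{m_{d}}\, \phi_{\mathbf{n}}(t) } = O(1)$ and then rerun the arguments of Lemma~\ref{lem:zeta_0-estimate} with $\phi_{\mathbf{n}}$ in place of $\phi_{0}$. The only cosmetic difference is how the moment bound is obtained---you use the $\eps$-independent ladder-operator representation of $\xi$ together with orthonormality of $\{\phi_{\mathbf{n}'}\}$, whereas the paper simply notes that $\phi_{\mathbf{n}}$ is $\phi_{0}$ multiplied by an $|\mathbf{n}|$-th order polynomial in $\xi$ and reuses the Gaussian moment computation.
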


\begin{proof}
  The proof is almost identical to that of Lemma~\ref{lem:zeta_0-estimate}.
  In fact, one can show that, for any $\mathbf{m}, \mathbf{n} \in \N_{0}^{d}$,
  \begin{equation*}
    \norm{ \xi_{1}^{m_{1}} \dots \xi_{d}^{m_{d}} \, \phi_{\mathbf{n}}(t) } = O(1)
  \end{equation*}
  because $\phi_{\mathbf{n}}$ is $\phi_{0}$ multiplied by an $|\mathbf{n}|$-th order polynomial of $\xi = \eps^{-1/2}(x - q(t))$; see also \citet[Eq.~(3.30)]{Ha1998}.
  It implies that those arguments with $\phi_{0}$ from Lemma~\ref{lem:zeta_0-estimate} still apply upon replacing $\phi_{0}$ by $\phi_{\mathbf{n}}$.
  Hence it follows that $\norm{ \alpha^{(0)}(q(t), Q(t)) \phi_{\mathbf{n}}(t) } = O(1)$ as well as that $\norm{ \alpha^{(1)}(q(t)) \phi_{\mathbf{n}}(t) } = O(1)$ as well.
\end{proof}

As a result, we have an expression and an estimate for $\mathcal{Z}_{\mathbf{n}}$ as follows:
\begin{proposition}
  \label{prop:magic_formula}
  The errors defined in \eqref{eq:mathcalZ} can be written in terms of the residual term $\zeta_{\mathbf{n}}$ from \eqref{eq:zeta_n} as follows:
  For any $\mathbf{n} \in \N_{0}^{d}$ and any $s, t \in \R$ for which $\phi_{\mathbf{n}}(s)$ and $\phi_{\mathbf{n}}(t)$ are defined,
  \begin{equation}
    \label{eq:magic_formula}
    \mathcal{Z}_{\mathbf{n}}(t,s) \defeq e^{-\rmi\hat{H}(t-s)/\eps} \phi_{\mathbf{n}}(s) - \phi_{\mathbf{n}}(t)
    = \rmi\,\eps^{1/2} \int_{s}^{t} e^{-\rmi\hat{H}(t-\tau)/\eps} \zeta_{\mathbf{n}}(\tau)\,d\tau,
  \end{equation}
  and hence $\norm{ \mathcal{Z}_{\mathbf{n}}(t,s) } = O(\eps^{1/2})$ in the sense that there exists some function $\mathscr{C}$ such that $\norm{ \mathcal{Z}_{\mathbf{n}}(t,s) } \le \mathscr{C}(t,s) \eps^{1/2}$.
\end{proposition}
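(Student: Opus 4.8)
The plan is to obtain \eqref{eq:magic_formula} as a Duhamel (variation-of-parameters) identity built directly on the Schr\"odinger-type equation of Proposition~\ref{prop:Schroedinger-phi_n}, and then to read off the $O(\eps^{1/2})$ bound from the unitarity of the propagator combined with Lemma~\ref{lem:zeta_ell-estimate}. First I would fix $t$ and introduce the $L^{2}(\R^{d})$-valued auxiliary function $f(\tau) \defeq e^{-\rmi\hat{H}(t-\tau)/\eps}\,\phi_{\mathbf{n}}(\tau)$, with $\tau$ ranging over an interval on which $\phi_{\mathbf{n}}(\tau)$ is defined. Differentiating in $\tau$ by the product rule produces two contributions: differentiating the propagator yields $\frac{\rmi}{\eps}\,e^{-\rmi\hat{H}(t-\tau)/\eps}\hat{H}\phi_{\mathbf{n}}(\tau)$, since the unitary group commutes with $\hat{H}$, while differentiating $\phi_{\mathbf{n}}(\tau)$ is handled by Proposition~\ref{prop:Schroedinger-phi_n}, which gives $\pd{}{\tau}\phi_{\mathbf{n}}(\tau) = -\frac{\rmi}{\eps}\hat{H}\phi_{\mathbf{n}}(\tau) - \rmi\,\eps^{1/2}\zeta_{\mathbf{n}}(\tau)$.

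The two terms carrying the unbounded operator $\hat{H}$ cancel identically, leaving
\[
  \od{}{\tau}f(\tau) = -\rmi\,\eps^{1/2}\,e^{-\rmi\hat{H}(t-\tau)/\eps}\,\zeta_{\mathbf{n}}(\tau).
\]
Integrating this from $s$ to $t$ and noting that $f(t) = \phi_{\mathbf{n}}(t)$ (the propagator reduces to the identity) while $f(s) = e^{-\rmi\hat{H}(t-s)/\eps}\phi_{\mathbf{n}}(s)$, I would rearrange to obtain exactly \eqref{eq:magic_formula}. For the estimate, the assumption that $V$ is bounded from below guarantees a self-adjoint extension of $\hat{H}$ generating a strongly continuous unitary group (Stone's theorem), so each $e^{-\rmi\hat{H}(t-\tau)/\eps}$ is an $L^{2}$-isometry. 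Hence
\[
  \norm{\mathcal{Z}_{\mathbf{n}}(t,s)} \le \eps^{1/2}\int_{s}^{t}\norm{\zeta_{\mathbf{n}}(\tau)}\,d\tau,
\]
and by Lemma~\ref{lem:zeta_ell-estimate} there is a bound $\norm{\zeta_{\mathbf{n}}(\tau)} \le \mathscr{C}_{0}(\tau)$; setting $\mathscr{C}(t,s) \defeq \int_{s}^{t}\mathscr{C}_{0}(\tau)\,d\tau$ then yields $\norm{\mathcal{Z}_{\mathbf{n}}(t,s)} \le \mathscr{C}(t,s)\,\eps^{1/2}$, i.e.\ the claimed $O(\eps^{1/2})$.

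The main obstacle is the functional-analytic justification of the differentiation step rather than any computation: one must verify that $\tau \mapsto f(\tau)$ is strongly differentiable as an $L^{2}$-valued map and that the formal product rule involving the unbounded $\hat{H}$ is legitimate. This rests on $\phi_{\mathbf{n}}(\tau)$ lying in the domain of the self-adjoint $\hat{H}$ for each $\tau$, which holds because $\phi_{\mathbf{n}}(\tau) \in \mathscr{S}(\R^{d})$, and on enough continuity in $\tau$ to apply the fundamental theorem of calculus for Banach-space-valued functions, with the integral on the right of \eqref{eq:magic_formula} understood as a Bochner integral in $L^{2}(\R^{d})$. In particular the cancellation of the $\hat{H}$ terms must be read as an identity in $L^{2}$, not merely pointwise in $x$. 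Once these regularity points are secured, the remainder is the routine manipulation sketched above.
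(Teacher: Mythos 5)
Your proposal is correct and follows essentially the same route as the paper's proof: a Duhamel-type computation showing that $\tau \mapsto e^{-\rmi\hat{H}(t-\tau)/\eps}\phi_{\mathbf{n}}(\tau)$ has derivative $-\rmi\,\eps^{1/2}e^{-\rmi\hat{H}(t-\tau)/\eps}\zeta_{\mathbf{n}}(\tau)$, followed by integration, unitarity of the propagator, and Lemma~\ref{lem:zeta_ell-estimate}. The additional functional-analytic care you flag (domains, strong differentiability, Bochner integral) is sensible but is left implicit in the paper, which cites Hagedorn's Lemma~2.8 for the same argument.
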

\begin{proof}
  This is essentially the same as the proof of \citet[Lemma~2.8]{Ha1998}, but we briefly reproduce it here for completeness.
  Using the Schr\"odinger-type equation~\eqref{eq:Schroedinger-phi_n} satisfied by $\tau \mapsto \phi_{\mathbf{n}}(\tau)$, we have
  \begin{equation*}
    \pd{}{\tau}\parentheses{ e^{-\rmi\hat{H}(t-\tau)/\eps} \phi_{\mathbf{n}}(\tau) } = -\rmi\,\eps^{1/2} e^{-\rmi\hat{H}(t-\tau)/\eps}\zeta_{\mathbf{n}}(\tau).
  \end{equation*}
  Integrating both sides with respect to $\tau$ over the time interval between $s$ and $t$ yields
  \begin{equation*}
    \phi_{\mathbf{n}}(t) - e^{-\rmi\hat{H}(t-s)/\eps} \phi_{\mathbf{n}}(s) = -\rmi\,\eps^{1/2} \int_{s}^{t} e^{-\rmi\hat{H}(t-\tau)/\eps}\zeta_{\mathbf{n}}(\tau)\,d\tau.
  \end{equation*}
  The left hand side is $-\mathcal{Z}_{\mathbf{n}}(t,s)$, and so \eqref{eq:magic_formula} follows.
  The estimate in norm follows by taking the norm of both sides of \eqref{eq:magic_formula}:
  \begin{equation*}
    \norm{ \mathcal{Z}_{\mathbf{n}}(t,s) } \le \eps^{1/2} \int_{s}^{t} \norm{\zeta_{\mathbf{n}}(\tau) }\,d\tau = O(\eps^{1/2}).
  \end{equation*}
  due to the unitarity of $e^{-\rmi\hat{H}(t-s)/\eps}$ as well as Lemma~\ref{lem:zeta_ell-estimate}.
\end{proof}

Particularly, setting $s = 0$, we have the following:
\begin{corollary}
  \label{cor:magic_formula}
  Let $t \mapsto \psi(t)$ be the solution to the Schr\"odinger equation~\eqref{eq:Schroedinger} with the initial condition $\psi(0) = \phi_{\mathbf{n}}(0)$ with $\mathbf{n} \in \N_{0}^{d}$.
  Then
  \begin{equation}
    \label{eq:magic_formula0}
    \mathcal{Z}_{\mathbf{n}}(t) = \psi(t) - \phi_{\mathbf{n}}(t) = \rmi\,\eps^{1/2} \int_{0}^{t} e^{-\rmi\hat{H}(t-s)/\eps} \zeta_{\mathbf{n}}(s)\,ds,
  \end{equation}
  and hence $\norm{ \psi(t) - \phi_{\mathbf{n}}(t) } = O(\eps^{1/2})$.
\end{corollary}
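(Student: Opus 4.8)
The plan is to obtain this as an immediate specialization of Proposition~\ref{prop:magic_formula} to $s = 0$, the only preliminary point being the identification of $\psi(t) - \phi_{\mathbf{n}}(t)$ with the error $\mathcal{Z}_{\mathbf{n}}(t)$. First I would invoke the remark made just before \eqref{eq:mathcalZ}: the hypothesis $C_{1} \le V$ guarantees a self-adjoint extension of $\hat{H}$, so that the propagator $e^{-\rmi\hat{H}t/\eps}$ is a well-defined one-parameter group of unitaries on $L^{2}(\R^{d})$. Consequently, the solution of the Schr\"odinger equation~\eqref{eq:Schroedinger} with initial datum $\psi(0) = \phi_{\mathbf{n}}(0)$ is precisely $\psi(t) = e^{-\rmi\hat{H}t/\eps}\phi_{\mathbf{n}}(0)$. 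Comparing this with the definition $\mathcal{Z}_{\mathbf{n}}(t) = \mathcal{Z}_{\mathbf{n}}(t,0) = e^{-\rmi\hat{H}t/\eps}\phi_{\mathbf{n}}(0) - \phi_{\mathbf{n}}(t)$ from \eqref{eq:mathcalZ} yields the first equality $\mathcal{Z}_{\mathbf{n}}(t) = \psi(t) - \phi_{\mathbf{n}}(t)$ in \eqref{eq:magic_formula0}.

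With this identification in hand, the two remaining assertions follow simply by putting $s = 0$ in Proposition~\ref{prop:magic_formula}. The integral representation \eqref{eq:magic_formula}, evaluated at $s = 0$, reads $\mathcal{Z}_{\mathbf{n}}(t,0) = \rmi\,\eps^{1/2}\int_{0}^{t} e^{-\rmi\hat{H}(t-\tau)/\eps}\zeta_{\mathbf{n}}(\tau)\,d\tau$, which is exactly the claimed formula \eqref{eq:magic_formula0} after relabeling the integration variable. Likewise, the norm estimate $\norm{\mathcal{Z}_{\mathbf{n}}(t,s)} = O(\eps^{1/2})$ of the proposition specializes to $\norm{\psi(t) - \phi_{\mathbf{n}}(t)} = O(\eps^{1/2})$.

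Since every substantive ingredient---the integral identity together with the uniform-in-$\eps$ bound $\norm{\zeta_{\mathbf{n}}(\tau)} = O(1)$ supplied by Lemma~\ref{lem:zeta_ell-estimate}---is already packaged inside Proposition~\ref{prop:magic_formula}, there is no genuine obstacle here: the corollary is a pure specialization. The only point that warrants a word of care is the first step, namely justifying that the abstract propagator $e^{-\rmi\hat{H}t/\eps}$ truly produces the solution $\psi$ of \eqref{eq:Schroedinger}; this is precisely why the self-adjointness remark preceding \eqref{eq:mathcalZ} is needed, and it is the one point I would state explicitly rather than leave implicit.
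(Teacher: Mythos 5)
Your proposal is correct and matches the paper exactly: the corollary is stated immediately after Proposition~\ref{prop:magic_formula} with the sole justification ``setting $s = 0$,'' which is precisely your specialization, and your added remark identifying $\psi(t) = e^{-\rmi\hat{H}t/\eps}\phi_{\mathbf{n}}(0)$ via the self-adjointness discussion preceding \eqref{eq:mathcalZ} is the same (implicit) step the paper relies on.
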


The above result reproduces those estimates obtained by \citet{Ha1980,Ha1998} using our equations~\eqref{eq:Symp_Hagedorn}, and also indicates that using \eqref{eq:Symp_Hagedorn} in place of Hagedorn's \eqref{eq:Hagedorn} does not improve the errors in wave function in terms of $L^{2}$-norm---at least not with the above method of estimation.
The reason why there is still a difference in the error estimates of the observables as stated in Theorem~\ref{thm:main} is that our estimates involve a more detailed analysis of the residual term $\zeta_{0}$ as opposed to just having an $L^{2}$-norm estimate of it.

\section{Proof of Main Result}
\label{sec:proof}
\subsection{Error Terms in Observables}
Let $t \mapsto \psi(t)$ be the exact solution of the initial value problem~\eqref{ivp:Schroedinger} of the Schr\"odinger equation.
From the definition of $\mathcal{Z}_{0}$ in \eqref{eq:mathcalZ} with $\mathbf{n} = 0$, we have $\psi(t) = \phi_{0}(t) + \mathcal{Z}_{0}(t)$, and so, as we have shown in Section~\ref{ssec:outline}, 
\begin{align*}
  \exval{\hat{x}}(t) - q(t)
  = 2\Re\ip{\mathcal{Z}_{0}(t)}{(\hat{x} - q(t)) \phi_{0}(t)} + \ip{\mathcal{Z}_{0}(t)}{(\hat{x} - q(t)) \mathcal{Z}_{0}(t)},
\end{align*}
and similarly,
\begin{equation*}
  \exval{\hat{p}}(t) - p(t)
  = 2\Re\ip{\mathcal{Z}_{0}(t)}{(\hat{p} - p(t)) \phi_{0}(t)} + \ip{\mathcal{Z}_{0}(t)}{(\hat{p} - p(t)) \mathcal{Z}_{0}(t)}.
\end{equation*}

In the remaining subsections, we finish the proof of Theorem~\ref{thm:main} by showing that the two terms on the right-hand side of each of the above equations are both $O(\eps^{3/2})$.

\subsection{Estimates for First Error Term}
\label{ssec:estimate1}
First we see that, using the expression \eqref{eq:magic_formula0} for $\mathcal{Z}_{0}$ and Fubini's Theorem,
\begin{equation}
  \label{eq:ip1}
  \ip{\mathcal{Z}_{0}(t)}{(\hat{x} - q(t)) \phi_{0}(t)}
  = -\rmi\,\eps^{1/2} \int_{0}^{t} \ip{ e^{-\rmi\hat{H}(t-s)/\eps}\zeta_{0}(s) }{ (\hat{x} - q(t)) \phi_{0}(t) }\,ds.
\end{equation}
However, we can rewrite the inner product inside the integral as follows using the relationship \eqref{eq:hatx-q_in_ladder_ops} between the operator $\hat{x} - q$ and the ladder operators: For any $i \in \{1, \dots, d\}$,
\begin{align*}
  \ip{ e^{-\rmi\hat{H}(t-s)/\eps}\zeta_{0}(s) }{ (\hat{x} - q(t))_{i}\,\phi_{0}(t) }
  &= \sqrt{ \frac{\eps}{2} } \ip{ e^{-\rmi\hat{H}(t-s)/\eps}\zeta_{0}(s) }{ \parentheses{ \overline{Q}_{ij}(t) \mathscr{A}_{j}(t) + Q_{ij}(t) \mathscr{A}_{j}^{*}(t) } \phi_{0}(t) } \\
  &= \sqrt{ \frac{\eps}{2} }\,Q_{ij}(t)\ip{ e^{-\rmi\hat{H}(t-s)/\eps}\zeta_{0}(s) }{ \phi_{\mathbf{e}_{j}}(t) } \\
  &= \sqrt{ \frac{\eps}{2} }\,Q_{ij}(t)\ip{ \zeta_{0}(s) }{ e^{-\rmi\hat{H}(s-t)/\eps}\phi_{\mathbf{e}_{j}}(t) } \\
  &= \sqrt{ \frac{\eps}{2} }\,Q_{ij}(t)\parentheses{
    \ip{ \zeta_{0}(s) }{ \phi_{\mathbf{e}_{j}}(s) }
    + \ip{ \zeta_{0}(s) }{ \mathcal{Z}_{\mathbf{e}_{j}}(s,t) }
    }
\end{align*}
where $\mathscr{A}^{*}(t)$ is defined in Lemma~\ref{lem:time_evolution_of_raising_op} and similarly for $\mathscr{A}(t)$; we used \eqref{eq:phi_n-raised} for the second equality, and \eqref{eq:magic_formula} ($s$ and $t$ swapped) with $\mathbf{n} = \mathbf{e}_{j}$ for the last equality, i.e., for any $j \in \{1, \dots, d\}$, $e^{-\rmi\hat{H}(s-t)/\eps}\phi_{\mathbf{e}_{j}}(t) = \phi_{\mathbf{e}_{j}}(s) + \mathcal{Z}_{\mathbf{e}_{j}}(s,t)$.

Let us evaluate the above two terms:
First, recalling the formulas~\eqref{eq:zeta_0} and \eqref{eq:alphas} for $\zeta_{0}$ and exploiting the orthogonality in Lemma~\ref{lem:orthogonality}, we have, for any $j \in \{1, \dots, d\}$,
\begin{align*}
  \ip{ \zeta_{0}(s) }{ \phi_{\mathbf{e}_{j}}(s) }
  &= \ip{ \alpha^{(0)}(q(s), Q(s)) \phi_{0}(s) }{ \phi_{\mathbf{e}_{j}}(s) }
    + \eps^{1/2}\ip{ \alpha^{(1)}(q(s)) \phi_{0}(s) }{ \phi_{\mathbf{e}_{j}}(s) } \\
  &= \eps^{1/2}\ip{ \alpha^{(1)}(q(s)) \phi_{0}(s) }{ \phi_{\mathbf{e}_{j}}(s) },
\end{align*}
and thus by the Cauchy--Schwarz inequality and Lemma~\ref{lem:zeta_0-estimate}~(\ref{lem:zeta_0-estimate2}),
\begin{gather*}
  \abs{ \ip{ \zeta_{0}(s) }{ \phi_{\mathbf{e}_{j}}(s) } } \le \eps^{1/2}\bigl\| \alpha^{(1)}(q(s)) \phi_{0}(s) \bigr\| \norm{ \phi_{\mathbf{e}_{j}}(s) } = O(\eps^{1/2}).
\end{gather*}
On the other hand, again by the Cauchy--Schwarz inequality, Lemma~\ref{lem:zeta_0-estimate}~(\ref{lem:zeta_0-estimate3}), and Proposition~\ref{prop:magic_formula}, we have, for any $j \in \{1, \dots, d\}$,
\begin{equation*}
  \abs{ \ip{ \zeta_{0}(s) }{ \mathcal{Z}_{\mathbf{e}_{j}}(s,t) } }
  \le \norm{ \zeta_{0}(s) } \norm{ \mathcal{Z}_{\mathbf{e}_{j}}(s,t) } 
  = O(\eps^{1/2}).
\end{equation*}
Hence we see that
\begin{equation*}
  \abs{ \ip{ e^{-\rmi\hat{H}(t-s)/\eps}\zeta_{0}(s) }{ (\hat{x} - q(t))_{i} \phi_{0}(t) } } = O(\eps),
\end{equation*}
and therefore \eqref{eq:ip1} yields, for any $i \in \{1, \dots, d\}$,
\begin{equation*}
  \abs{ \ip{\mathcal{Z}_{0}(t)}{(\hat{x} - q(t))_{i} \phi_{0}(t)} }
  \le \eps^{1/2} \int_{0}^{t} \abs{ \ip{ e^{-\rmi\hat{H}(t-s)/\eps}\zeta_{0}(s) }{ (\hat{x} - q(t)) \phi_{0}(t) } } \,ds
  = O(\eps^{3/2}).
\end{equation*}

Using the relationship \eqref{eq:hatp-p_in_ladder_ops} between the operator $\hat{p} - p$ and the ladder operators, we can proceed in the same way to obtain
\begin{equation*}
  \abs{ \ip{\mathcal{Z}_{0}(t)}{(\hat{p} - p(t))_{i} \phi_{0}(t)} } = O(\eps^{3/2})
\end{equation*}
for any $i \in \{1, \dots, d\}$ as well.

\begin{remark}
  \label{rem:Hagedorn2}
  What if one uses the classical Hamiltonian system for $(q,p)$ as in \eqref{eq:Hagedorn} of Hagedorn?
  As discussed in Remark~\ref{rem:Hagedorn1}, we have $\alpha$ as shown in \eqref{eq:alpha-Hagedorn}.
  Then, as shown in Appendix~\ref{sseca:Hagedorn2}, we have
  \begin{equation*}
    \ip{ \zeta_{0}(s) }{ \phi_{\mathbf{e}_{j}}(s) } = O(1)
  \end{equation*}
  in this case as opposed to $O(\eps^{1/2})$.
  Indeed the leading $O(1)$ term (see \eqref{eq:ip-Hagedorn2} in Appendix~\ref{sseca:Hagedorn2}) is exactly the term canceled due to the first term in \eqref{eq:alpha} coming from the correction term in our case.
  This underscores the importance of the correction term alluded in Remark~\ref{rem:Hagedorn1}.
  As a result, the above estimates become
  \begin{equation*}
    \abs{ \ip{\mathcal{Z}_{0}(t)}{(\hat{x} - q(t))_{i} \phi_{0}(t)} } = O(\eps),
    \qquad
    \abs{ \ip{\mathcal{Z}_{0}(t)}{(\hat{p} - p(t))_{i} \phi_{0}(t)} } = O(\eps).
  \end{equation*}
  as opposed to $O(\eps^{3/2})$.
\end{remark}

\subsection{Estimates for Second Error Term}
It now remains to show that, for any $i \in \{1, \dots, d\}$,
\begin{equation*}
  \abs{ \ip{\mathcal{Z}_{0}(t)}{(\hat{x} - q(t))_{i} \mathcal{Z}_{0}(t)} } = O(\eps^{3/2}),
  \qquad
  \abs{ \ip{\mathcal{Z}_{0}(t)}{(\hat{p} - p(t))_{i} \mathcal{Z}_{0}(t)} } = O(\eps^{3/2}).
\end{equation*}

From \eqref{eq:Schroedinger} and \eqref{eq:Schroedinger-phi_0}, we see that
\begin{equation*}
  \dot{\mathcal{Z}}_{0}(t)
  = \dot{\psi}(t) - \dot{\phi}_{0}(t) = 
  -\frac{\rmi}{\eps} \hat{H} \mathcal{Z}_{0}(t)
  + \rmi\,\eps^{1/2} \zeta_{0}(t).
\end{equation*}
Therefore,
\begin{align*}
  \od{}{t} \parentheses{ (\hat{x} - q(t)) \mathcal{Z}_{0}(t) }
  &= -\dot{q}(t)\, \mathcal{Z}_{0}(t) + (\hat{x} - q(t)) \dot{\mathcal{Z}}_{0}(t) \\
  &= -p(t)\, \mathcal{Z}_{0}(t) - \frac{\rmi}{\eps} (\hat{x} - q(t)) \hat{H} \mathcal{Z}_{0}(t)
    + \rmi\,\eps^{1/2}  (\hat{x} - q(t)) \zeta_{0}(t) \\
  &= -p(t)\, \mathcal{Z}_{0}(t) - \frac{\rmi}{\eps} \parentheses{ \brackets{ \hat{x}, \hat{H} }
    + \hat{H} (\hat{x} - q(t)) } \mathcal{Z}_{0}(t)
    + \rmi\,\eps^{1/2}  (\hat{x} - q(t)) \zeta_{0}(t)\\
  &= (\hat{p} - p(t)) \mathcal{Z}_{0}(t)
    - \frac{\rmi}{\eps} \hat{H} (\hat{x} - q(t)) \mathcal{Z}_{0}(t)
    + \rmi\,\eps\,\hat{\xi}(t) \zeta_{0}(t),
\end{align*}
where the last equality follows from
\begin{align*}
  \brackets{ \hat{x}, \hat{H} }
  = \brackets{ \hat{x}, \frac{\hat{p}^{2}}{2} }
  = \rmi\,\eps\,\hat{p},
\end{align*}
and also setting $\hat{\xi}(t) \defeq \eps^{-1/2}(\hat{x} - q(t))$.
Applying $e^{\rmi\hat{H} t/\eps}$ to both sides, we have
\begin{multline*}
  e^{\rmi\hat{H} t/\eps} \od{}{t} \parentheses{ (\hat{x} - q(t)) \mathcal{Z}_{0}(t) }
  + e^{\rmi\hat{H} t/\eps} \frac{\rmi}{\eps} \hat{H} (\hat{x} - q(t)) \mathcal{Z}_{0}(t)
  = e^{\rmi\hat{H} t/\eps}(\hat{p} - p(t)) \mathcal{Z}_{0}(t)
  + \rmi\,\eps\,e^{\rmi\hat{H} t/\eps}\, \hat{\xi}(t) \zeta_{0}(t)
\end{multline*}
or
\begin{equation*}
  \od{}{t} \parentheses{ e^{\rmi\hat{H} t/\eps} (\hat{x} - q(t)) \mathcal{Z}_{0}(t) }
  = e^{\rmi\hat{H} t/\eps} (\hat{p} - p(t)) \mathcal{Z}_{0}(t)
  + \rmi\,\eps\, e^{\rmi\hat{H} t/\eps} \hat{\xi}(t) \zeta_{0}(t).
\end{equation*}
Integrating both sides on the interval $[0,t]$ and using $\mathcal{Z}_{0}(0) = 0$, we have
\begin{equation*}
  e^{\rmi\hat{H} t/\eps} (\hat{x} - q(t)) \mathcal{Z}_{0}(t)
  = \int_{0}^{t} e^{\rmi\hat{H} s/\eps} (\hat{p} - p(s)) \mathcal{Z}_{0}(s)\, ds
    + \rmi\,\eps \int_{0}^{t} e^{\rmi\hat{H} s/\eps}\, \hat{\xi}(s) \zeta_{0}(s)\, ds.
\end{equation*}
Taking the $L^{2}$-norm of the $i$-th component of both sides with $i \in \{1, \dots, d\}$,
\begin{align}
  \label{eq:x-var_ineq}
  \norm{ (\hat{x} - q(t))_{i} \mathcal{Z}_{0}(t) }
  &\le \int_{0}^{t} \norm{ (\hat{p} - p(s))_{i} \mathcal{Z}_{0}(s) }\, ds
    + \eps \int_{0}^{t} \norm{ \hat{\xi}_{i}(s) \zeta_{0}(s) }\, ds \nonumber\\
  &= \int_{0}^{t} \norm{ (\hat{p} - p(s))_{i} \mathcal{Z}_{0}(s) }\, ds
    + O(\eps),
\end{align}
where we used the estimate $\norm{ \hat{\xi}_{i}(s) \zeta_{0}(s) } = O(1)$ from Lemma~\ref{lem:zeta_0-estimate}~(\ref{lem:zeta_0-estimate4}).

Similarly,
\begin{align*}
  \od{}{t} \parentheses{ (\hat{p} - p(t)) \mathcal{Z}_{0}(t) }
  &= -\dot{p}(t)\, \mathcal{Z}_{0}(t) + (\hat{p} - p(t)) \dot{\mathcal{Z}}_{0}(t) \\
  &= \parentheses{ DV(q(t)) + \eps\,\partial_{q}V^{(1)}(q(t),Q(t)) } \mathcal{Z}_{0}(t)
    - \frac{\rmi}{\eps} (\hat{p} - p(t)) \hat{H} \mathcal{Z}_{0}(t) \\
  &\quad+ \rmi\,\eps^{1/2}  (\hat{p} - p(t)) \zeta_{0}(t) \\
  &= \parentheses{ DV(q(t)) + \eps\,\partial_{q}V^{(1)}(q(t),Q(t)) } \mathcal{Z}_{0}(t)
    - \frac{\rmi}{\eps} \parentheses{ \brackets{ \hat{p}, \hat{H} }
    + \hat{H} (\hat{p} - p(t)) } \mathcal{Z}_{0}(t) \\
  &\quad + \rmi\,\eps^{1/2}  (\hat{p} - p(t)) \zeta_{0}(t) \\
  &= \parentheses{ DV(q(t)) - DV(x) + \eps\,\partial_{q}V^{(1)}(q(t),Q(t)) } \mathcal{Z}_{0}(t)
    - \frac{\rmi}{\eps} \hat{H} (\hat{p} - p(t)) \mathcal{Z}_{0}(t) \\
  &\quad + \rmi\,\eps^{1/2}  (\hat{p} - p(t)) \zeta_{0}(t) \\
  &= \parentheses{ -D^{2}V(\sigma_{2}(x,q(t))) (\hat{x} - q(t)) + \eps\,\partial_{q}V^{(1)}(q(t),Q(t)) } \mathcal{Z}_{0}(t) \\
  &\quad - \frac{\rmi}{\eps} \hat{H} (\hat{p} - p(t)) \mathcal{Z}_{0}(t)
  + \rmi\,\eps\, \hat{\eta}(t) \zeta_{0}(t),
\end{align*}
where the second last equality follows from
\begin{align*}
  \brackets{ \hat{p}, \hat{H} }
  = \brackets{ \hat{p}, V(x) }
  = -\rmi\,\eps\,DV(x),
\end{align*}
and $\sigma_{2}(x,q(t))$ is a point in the segment joining $x$ and $q(t)$ in $\R^{d}$; we also set $\hat{\eta}(t) \defeq \eps^{-1/2}(\hat{p} - p(t))$.
Applying $e^{\rmi\hat{H} t/\eps}$ to both sides,
\begin{multline*}
  \od{}{t} \parentheses{ e^{\rmi\hat{H} t/\eps}  (\hat{p} - p(t)) \mathcal{Z}_{0}(t) } \\
  = e^{\rmi\hat{H} t/\eps} \parentheses{ -D^{2}V(\sigma_{2}(x,q(t))) (\hat{x} - q(t)) + \eps\,\partial_{q}V^{(1)}(q(t),Q(t)) } \mathcal{Z}_{0}(t) 
  + \rmi\,\eps\, e^{\rmi\hat{H} t/\eps} \hat{\eta}(t) \zeta_{0}(t).
\end{multline*}
Integrating both sides on $[0,t]$, we have
\begin{align*}
  e^{\rmi\hat{H} t/\eps} (\hat{p} - p(t)) \mathcal{Z}_{0}(t)
  &= \int_{0}^{t} e^{\rmi\hat{H} s/\eps} \parentheses{ -D^{2}V(\sigma_{2}(x,q(s))) (\hat{x} - q(s)) + \eps\,\partial_{q}V^{(1)}(q(s),Q(s)) } \mathcal{Z}_{0}(s)\, ds \\
  &\quad + \rmi\,\eps \int_{0}^{t} e^{\rmi\hat{H} s/\eps} \hat{\eta}(s) \zeta_{0}(s)\, ds.
\end{align*}
Taking the $L^{2}$-norm of the $i$-th component of both sides for any $i \in \{1, \dots, d\}$,
\begin{align}
  \label{eq:p-var_ineq}
  \norm{ (\hat{p} - p(t))_{i} \mathcal{Z}_{0}(t) }
  &\le \int_{0}^{t} \norm{ \sum_{j=1}^{d} D^{2}_{ij}V(\sigma_{2}(x,q(s))) (\hat{x} - q(s))_{j} \mathcal{Z}_{0}(s) }\,ds \nonumber\\
  &\quad + \eps \int_{0}^{t} \parentheses{
    \abs{ \partial_{q_{i}}V^{(1)}(q(s),Q(s)) }\,\norm{ \mathcal{Z}_{0}(s) }
    + \norm{ \hat{\eta}_{i}(s) \zeta_{0}(s) }
    }\, ds \nonumber\\
  &\le \int_{0}^{t} \sum_{j=1}^{d} \norm{ D^{2}_{ij}V(\sigma_{2}(x,q(s))) (\hat{x} - q(s))_{j} \mathcal{Z}_{0}(s) }\,ds
    + O(\eps) \nonumber\\
  &\le C_{3} \int_{0}^{t} \sum_{j=1}^{d} \norm{ (\hat{x} - q(s))_{j} \mathcal{Z}_{0}(s) }\,ds
    + O(\eps),
\end{align}
where we used the following bound of the second derivative of $V$
\begin{equation*}
  C_{3} \defeq \max_{1\le i,j\le d} \sup_{x \in \R^{d}} \abs{ D^{2}_{ij}V(x) }
\end{equation*}
as well as the following: $\norm{ \mathcal{Z}_{0}(s) } = O(\eps^{1/2})$ from Corollary~\ref{cor:magic_formula} and $\norm{ \hat{\eta}_{i}(s) \zeta_{0}(s) } = O(1)$ from Lemma~\ref{lem:zeta_0-estimate}~(\ref{lem:zeta_0-estimate5}).

Now, let us set
\begin{equation*}
  f(t) \defeq \sum_{i=1}^{d} \parentheses{ \norm{ (\hat{x} - q(t))_{i} \mathcal{Z}_{0}(t) } + \norm{ (\hat{p} - p(t))_{i} \mathcal{Z}_{0}(t) } }.
\end{equation*}
Then, using \eqref{eq:x-var_ineq} and \eqref{eq:p-var_ineq}, we have
\begin{align*}
  f(t) &\le \int_{0}^{t} \sum_{i=1}^{d} \norm{ (\hat{p} - p(s))_{i} \mathcal{Z}_{0}(s) }\,ds
         + d\,C_{3} \int_{0}^{t} \sum_{j=1}^{d} \norm{ (\hat{x} - q(s))_{j} \mathcal{Z}_{0}(s) }\,ds
         + O(\eps) \\
       &\le C_{4} \int_{0}^{t} \sum_{i=1}^{d} \parentheses{
         \norm{ (\hat{p} - p(s))_{i} \mathcal{Z}_{0}(s) } + \norm{ (\hat{x} - q(s))_{i} \mathcal{Z}_{0}(s) }
         }\, ds
         + O(\eps) \\
       &= C_{4} \int_{0}^{t} f(s)\,ds + O(\eps),
\end{align*}
where we defined $C_{4} \defeq \max\{ 1, d\,C_{3} \}$.
Therefore, by Gronwall's inequality~\cite{Gr1919}, we obtain
\begin{equation*}
  f(t) \le O(\eps) \exp(C_{4}t),
\end{equation*}
that is, $f(t) = O(\eps)$, and so we have, for any $i \in \{1, \dots, d\}$,
\begin{equation*}
  \norm{ (\hat{x} - q(t))_{i} \mathcal{Z}_{0}(t) } = O(\eps),
  \qquad
  \norm{ (\hat{p} - p(t))_{i} \mathcal{Z}_{0}(t) } = O(\eps).
\end{equation*}

As a result, by the Cauchy--Schwarz inequality, we obtain
\begin{equation*}
  \abs{ \ip{\mathcal{Z}_{0}(t)}{(\hat{x} - q(t))_{i} \mathcal{Z}_{0}(t)} }
  \le \norm{ \mathcal{Z}_{0}(t) }\, \norm{ (\hat{x} - q(t))_{i} \mathcal{Z}_{0}(t) } = O(\eps^{3/2}),
\end{equation*}
and similarly
\begin{equation*}
  \abs{ \ip{\mathcal{Z}_{0}(t)}{(\hat{p} - p(t))_{i} \mathcal{Z}_{0}(t)} } = O(\eps^{3/2})
\end{equation*}
as well.

Therefore, we conclude that, for any $i \in \{1, \dots, d\}$,
\begin{equation*}
  \exval{\hat{x}_{i}}(t) - q_{i}(t) = O(\eps^{3/2}),
  \qquad
  \exval{\hat{p}_{i}}(t) - p_{i}(t) = O(\eps^{3/2}).
\end{equation*}

\begin{remark}
  If one uses the classical Hamiltonian system for $(q,p)$ as in \eqref{eq:Hagedorn} of Hagedorn, then the estimate of the second error term proceeds similarly and hence it is still $O(\eps^{3/2})$---the difference is the expression of the residual term $\zeta_{0}$ as well as the absence of the term with $V^{(1)}$.
  These do not affect the estimate of the second error term---it is still $O(\eps^{3/2})$.
  However, as discussed in Remark~\ref{rem:Hagedorn2}, the estimate of the first error term now becomes $O(\eps)$ and hence the total error is $O(\eps)$.
\end{remark}

\section*{Acknowledgments}
I would like to thank George Hagedorn, Caroline Lasser, Alex Watson, and Bin Cheng for helpful comments and discussions.
This work was partially supported by NSF grant DMS-2006736.
The author states that there is no conflict of interest.

\appendix
\section{Additional Details}
\subsection{Details of Proof of Lemma~\ref{lem:zeta_0-estimate}~(\ref{lem:zeta_0-estimate5})}
\label{sseca:zeta_0-estimate5}
We suppress the time dependence of $\zeta_{0}$, $\phi_{0}$, and $(q, p, Q, P)$ for brevity here.
First we have
\begin{equation*}
  (\hat{p}_{i} - p_{i}) \zeta_{0}(x)
  = \parentheses{ \hat{p}_{i} \alpha(q,Q; x) } \phi_{0}(x)
  + \alpha(q,Q; x) \parentheses{ \hat{p}_{i} \phi_{0}(x) }
  - p_{i} \zeta_{0}(x).
\end{equation*}
However,
\begin{align*}
  \hat{p}_{i} \alpha(q,Q; x)
  &= -\rmi\,\eps \pd{}{x_{i}} \alpha(q,Q; x) \\
  &= \eps^{1/2} \beta_{i}(q,Q;x) \\
  &= \eps^{1/2} \parentheses{ \beta_{i}^{(0)}(q,Q;x) + \eps^{1/2} \beta_{i}^{(1)}(q;x) },
\end{align*}
where we defined
\begin{gather*}
  \beta_{i}(q,Q; x) \defeq \beta_{i}^{(0)}(q,Q;x) + \eps^{1/2}\beta_{i}^{(1)}(q;x), \\
  \beta_{i}^{(0)}(q, Q; x) \defeq -\rmi\,\eps^{1/2} \pd{}{x_{i}} \alpha^{(0)}(q,Q; x),
  \qquad
  \beta_{i}^{(1)}(q; x) \defeq -\rmi\,\eps^{1/2} \pd{}{x_{i}} \alpha^{(1)}(q; x),
\end{gather*}
which yield the expressions in \eqref{eq:betas}.
On the other hand, using the expression \eqref{eq:phi_0} for $\phi_{0}$,
\begin{equation*}
  \hat{p} \phi_{0}(x) = \parentheses{ P Q^{-1}(x - q) + p } \phi_{0}(x).
\end{equation*}
Therefore,
\begin{equation*}
  (\hat{p}_{i} - p_{i}) \zeta_{0}(x)
  = \parentheses{ \eps^{1/2}\beta_{i}(q, Q; x) + \alpha(q,Q;x)  (P Q^{-1})_{ij}(x - q)_{j} } \phi_{0}(x),
\end{equation*}
and thus
\begin{equation*}
  \hat{\eta}_{i} \zeta_{0}(x)
  = \parentheses{ \beta_{i}(q,Q; x) + \alpha(q,Q;x)  (P Q^{-1})_{ij} \xi_{j} } \phi_{0}(x),
\end{equation*}
which gives \eqref{eq:eta_zeta}.

\subsection{Details of Proof of Lemma~\ref{lem:orthogonality}}
\label{sseca:orthogonality}
Let us show the detailed derivation of \eqref{eq:alpha0-2nd_term}.
We first have, using \eqref{eq:hatx-q_in_ladder_ops},
\begin{multline*}
  \frac{\eps^{-3/2}}{3!} D^{3}_{ijk}V(q) (x - q)^{3}_{ijk} \phi_{0}
  = \frac{1}{12\sqrt{2}} \mathcal{T}_{ijk} (\overline{Q}_{il} \mathscr{A}_{l} + Q_{il} \mathscr{A}^{*}_{l}) (\overline{Q}_{jm} \mathscr{A}_{m} + Q_{jm} \mathscr{A}^{*}_{m}) (\overline{Q}_{kn} \mathscr{A}_{n} + Q_{kn} \mathscr{A}^{*}_{n}) \phi_{0}.
\end{multline*}
Notice that, applying lowering operator(s) twice and a raising operator once---regardless of the order---to $\phi_{0}$ results in zero, and the same goes with lowering operator(s) thrice as well.
Therefore, we have
\begin{align*}
  (\overline{Q}_{il} \mathscr{A}_{l}& + Q_{il} \mathscr{A}^{*}_{l}) (\overline{Q}_{jm} \mathscr{A}_{m} + Q_{jm} \mathscr{A}^{*}_{m}) (\overline{Q}_{kn} \mathscr{A}_{n} + Q_{kn} \mathscr{A}^{*}_{n}) \phi_{0} \\
                                    &= 
                                      Q_{il} Q_{jm} Q_{kn} \mathscr{A}^{*}_{l} \mathscr{A}^{*}_{m} \mathscr{A}^{*}_{n} \phi_{0} \\
                                    &\quad + \parentheses{
                                      \overline{Q}_{il} Q_{jm} Q_{kn} \mathscr{A}_{l} \mathscr{A}^{*}_{m} \mathscr{A}^{*}_{n}
                                      + Q_{il} \overline{Q}_{jm} Q_{kn} \mathscr{A}^{*}_{l} \mathscr{A}_{m} \mathscr{A}^{*}_{n}
                                      + Q_{il} Q_{jm} \overline{Q}_{kn} \mathscr{A}^{*}_{l} \mathscr{A}^{*}_{m} \mathscr{A}_{n}
                                      } \phi_{0} \\
                                    &= Q_{il} Q_{jm} Q_{kn} \mathscr{A}^{*}_{l} \mathscr{A}^{*}_{m} \mathscr{A}^{*}_{n} \phi_{0}
                                      + \parentheses{
                                      \overline{Q}_{il} Q_{jm} Q_{kn} \mathscr{A}_{l} \mathscr{A}^{*}_{m} \mathscr{A}^{*}_{n}
                                      + Q_{il} \overline{Q}_{jm} Q_{kn} \mathscr{A}^{*}_{l} \mathscr{A}_{m} \mathscr{A}^{*}_{n}
                                      } \phi_{0},
\end{align*}
where we used \eqref{eq:A-phi_0}.

However, we may use \eqref{eq:commutator-As} and \eqref{eq:A-phi_0} to simplify the second and last terms as follows:
\begin{align*}
  \mathscr{A}_{l} \mathscr{A}^{*}_{m} \mathscr{A}^{*}_{n} \phi_{0}
  &= (\delta_{lm} + \mathscr{A}^{*}_{m} \mathscr{A}_{l}) \mathscr{A}^{*}_{n} \phi_{0} \\
  &= \delta_{lm} \mathscr{A}^{*}_{n} \phi_{0} + \mathscr{A}^{*}_{m} (\delta_{ln} +  \mathscr{A}^{*}_{n} \mathscr{A}_{l}) \phi_{0} \\
  &= (\delta_{lm} \mathscr{A}^{*}_{n} + \delta_{ln} \mathscr{A}^{*}_{m}) \phi_{0},
\end{align*}
and
\begin{align*}
  \mathscr{A}^{*}_{l} \mathscr{A}_{m} \mathscr{A}^{*}_{n} \phi_{0}
  &= \mathscr{A}^{*}_{l} (\delta_{mn} + \mathscr{A}^{*}_{n} \mathscr{A}_{m}) \phi_{0} \\
  &= \delta_{mn} \mathscr{A}^{*}_{l} \phi_{0}.
\end{align*}
Therefore,
\begin{align*}
  (\overline{Q}_{il} \mathscr{A}_{l}& + Q_{il} \mathscr{A}^{*}_{l}) (\overline{Q}_{jm} \mathscr{A}_{m} + Q_{jm} \mathscr{A}^{*}_{m}) (\overline{Q}_{kn} \mathscr{A}_{n} + Q_{kn} \mathscr{A}^{*}_{n}) \phi_{0} \\
                                    &= Q_{il} Q_{jm} Q_{kn} \mathscr{A}^{*}_{l} \mathscr{A}^{*}_{m} \mathscr{A}^{*}_{n} \phi_{0}
                                      + \parentheses{
                                      \overline{Q}_{il} Q_{jl} Q_{kn} \mathscr{A}^{*}_{n}
                                      + \overline{Q}_{il} Q_{jm} Q_{kl} \mathscr{A}^{*}_{m}
                                      + Q_{il} \overline{Q}_{jm} Q_{km} \mathscr{A}^{*}_{l} 
                                      } \phi_{0},
\end{align*}
and so
\begin{align*}
  \frac{\eps^{-3/2}}{3!} D^{3}_{ijk}V(q) (x - q)^{3}_{ijk} \phi_{0}
  &= \frac{1}{12\sqrt{2}} \mathcal{T}_{ijk} 
    Q_{il} Q_{jm} Q_{kn} \mathscr{A}^{*}_{l} \mathscr{A}^{*}_{m} \mathscr{A}^{*}_{n} \phi_{0} \\
  &\quad + \frac{1}{12\sqrt{2}}  \mathcal{T}_{ijk} \parentheses{
    \overline{Q}_{il} Q_{jl} Q_{kn} \mathscr{A}^{*}_{n}
    + \overline{Q}_{il} Q_{jm} Q_{kl} \mathscr{A}^{*}_{m}
    + Q_{il} \overline{Q}_{jm} Q_{km} \mathscr{A}^{*}_{l} 
    } \phi_{0} \\
  &= \frac{1}{4\sqrt{3}} \mathcal{T}_{ijk} 
    Q_{il} Q_{jm} Q_{kn} \phi_{\mathbf{e}_{l} + \mathbf{e}_{m} + \mathbf{e}_{n}}
    + \frac{1}{4\sqrt{2}} \mathcal{T}_{ijk} 
    \overline{Q}_{il} Q_{jl} Q_{kn} \phi_{\mathbf{e}_{n}},
\end{align*}
where we used \eqref{eq:phi_n-raised} to rewrite $\mathscr{A}^{*}_{l} \mathscr{A}^{*}_{m} \mathscr{A}^{*}_{n} \phi_{0}$ as $\sqrt{6}\,\phi_{\mathbf{e}_{l} + \mathbf{e}_{m} + \mathbf{e}_{n}}$ as well as the permutation symmetry of $\mathcal{T}$ in its indices.
Hence we obtain \eqref{eq:alpha0-2nd_term}.

\subsection{Details on Remark~\ref{rem:Hagedorn2}}
\label{sseca:Hagedorn2}
We drop the time and spatial dependence for brevity here.
Just as we have done in the above subsection, rewriting the cubic term in \eqref{eq:alpha-Hagedorn} for $\alpha$ using \eqref{eq:hatx-q_in_ladder_ops}, we obtain
\begin{align*}
  \zeta_{0} = \alpha(q, Q)\,\phi_{0}
  &= -\frac{1}{4\sqrt{3}} \mathcal{T}_{abc} Q_{al} Q_{bm} Q_{cn} \phi_{\mathbf{e}_{l} + \mathbf{e}_{m} + \mathbf{e}_{n}}
    - \frac{1}{4\sqrt{2}} \mathcal{T}_{abc} \overline{Q}_{al} Q_{bl} Q_{cn} \phi_{\mathbf{e}_{n}} \\
  &\quad - \eps^{1/2}\,\frac{1}{4!} D^{4}V(\sigma_{1}(x,q)) \cdot \xi^{4} \phi_{0},
\end{align*}
where $\mathcal{T}_{abc} \defeq D^{3}_{abc}V(q)$ and $\xi \defeq \eps^{-1/2}(x - q)$.
Therefore,
\begin{align}
  \ip{ \zeta_{0} }{ \phi_{\mathbf{e}_{j}} }
  &= -\frac{1}{4\sqrt{3}} \mathcal{T}_{abc} \overline{Q}_{al} \overline{Q}_{bm} \overline{Q}_{cn} \ip{ \phi_{\mathbf{e}_{l} + \mathbf{e}_{m} + \mathbf{e}_{n}} }{ \phi_{\mathbf{e}_{j}} }
    - \frac{1}{4\sqrt{2}} \mathcal{T}_{abc} Q_{al} \overline{Q}_{bl} \overline{Q}_{cn} \ip{ \phi_{\mathbf{e}_{n}} }{ \phi_{\mathbf{e}_{j}} } \nonumber\\
  &\quad - \eps^{1/2}\,\frac{1}{4!} \ip{ D^{4}V(\sigma_{1}(x,q)) \cdot \xi^{4} \phi_{0} }{ \phi_{\mathbf{e}_{j}} } \nonumber\\
  &= - \frac{1}{4\sqrt{2}} \mathcal{T}_{abc} Q_{al} \overline{Q}_{bl} \overline{Q}_{cj} + O(\eps^{1/2}) \nonumber\\
  &= - \frac{1}{\sqrt{2}} Q^{*}_{jc} \partial_{q_{c}}V^{(1)}(q,Q) + O(\eps^{1/2}),
    \label{eq:ip-Hagedorn2}
\end{align}
because we can obtain the estimate
\begin{equation*}
  \frac{1}{4!} \ip{ D^{4}V(\sigma_{1}(x,q)) \cdot \xi^{4} \phi_{0} }{ \phi_{\mathbf{e}_{j}} } = O(1)
\end{equation*}
just as we did in the proof of Lemma~\ref{lem:zeta_0-estimate}~(\ref{lem:zeta_0-estimate2}).
We also used the expression~\eqref{eq:V^1} for $V^{(1)}$ in the last equality.
As a result, we have $\ip{ \zeta_{0} }{ \phi_{\mathbf{e}_{j}} } = O(1)$.
    
\bibliography{ExVal-SympGWP}
\bibliographystyle{plainnat}

\end{document}